\definecolor{NicoColor}{RGB}{56, 174, 199}
\newtheorem{theorem}{Theorem}[section]
\newtheorem{lemma}[theorem]{Lemma}
\newtheorem{proposition}[theorem]{Proposition}
\newtheorem{corollary}[theorem]{Corollary}
\theoremstyle{definition}
\newtheorem{definition}[theorem]{Definition}
\theoremstyle{remark}
\newtheorem{remark}{Remark}
\def\bR{{\mathbb R}}
\newcommand{\supp}{{\rm supp}}
\newcommand{\myrightleftarrows}{\mathrel{\substack{\longrightarrow \\[-.6ex] \longleftarrow}}}
\begin{document}

\par 
\bigskip 
\par 
\rm 
 
\par 
\bigskip 
\LARGE 
\noindent 
{\bf Constructing Hadamard states via an extended M\o ller operator} 
\bigskip 
\par 
\rm 
\normalsize 

\large
\noindent {\bf Claudio Dappiaggi$^{1,2,a}$},
 {\bf Nicolo' Drago}$^{3,4,b}$ \\
\par
\small
\noindent $^1$ 
Dipartimento di Fisica, Universit\`a degli Studi di Pavia,
Via Bassi, 6, I-27100 Pavia, Italy.\smallskip

\noindent$^2$  Istituto Nazionale di Fisica Nucleare - Sezione di Pavia,
Via Bassi, 6, I-27100 Pavia, Italy.\smallskip

\noindent$^3$ Dipartimento di Matematica, Universit\`a di Genova - Via Dodecaneso 35, I-16146 Genova, Italy.\smallskip

\noindent$^4$ Istituto Nazionale di Fisica Nucleare - Sezione di Genova, Via Dodecaneso, 33 I-16146 Genova, Italy.

\bigskip

\noindent $^a$  claudio.dappiaggi@unipv.it,
$^b$  drago@dima.unige.it
 \normalsize

\par 
 
\rm\normalsize 
\bigskip
\noindent {\small Version of \today}

\rm\normalsize 
 
 
\par 
\bigskip 

\noindent 
\small 
{\bf Abstract}. 
We consider real scalar field theories whose dynamics is ruled by normally hyperbolic operators differing only by a smooth potential $V$. By means of an extension of the standard definition of M\o ller operator, we construct an  isomorphism between the associated spaces of smooth solutions and between the associated algebras of observables. On the one hand such isomorphism is non-canonical since it depends on the choice of a smooth time-dependant cut-off function. On the other hand, given any quasi-free Hadamard state for a theory with a given $V$, such isomorphism allows for the construction of another quasi-free Hadamard state for a different potential. The resulting state preserves also the invariance under the action of any isometry, whose associated Killing field commutes with the vector field built out of the normal vectors to a family of Cauchy surfaces, foliating the underlying manifold. Eventually we discuss a sufficient condition to remove on static spacetimes the dependence on the cut-off via a suitable adiabatic limit.

\vskip .3cm

\noindent {\em Keywords:} quantum field theory on curved backgrounds, Hadamard sates, M\o ller operator

\vskip .3cm

\noindent {\em MSC:} 81T20, 81T05

\normalsize
\bigskip 

\section{Introduction}
The algebraic approach is a mathematically rigorous scheme which is especially well-suited for formulating quantum theories also on globally hyperbolic spacetimes -- see \cite{Benini:2015bsa,Brunetti:2013maa} for recent reviews. In a few words it can be described as being based on two key objects. The first consists of the assignment to a physical system of a suitable $*$-algebra of observables, which encodes structural properties such as locality, causality and the dynamics. For free field theories this step has been thoroughly investigated and it is well-understood -- see for example \cite{Benini:2013fia,Benini:2015bsa,Fredenhagen:2014lda}. 

The second ingredient consists of the identification of an algebraic state, that is a positive, linear and normalized functional on the algebra of observables. Via the celebrated GNS theorem one can recover the usual probabilistic interpretation proper of quantum theories. Yet, between the plethora of possible states, not all are considered to be physically acceptable and a widely recognized criterion to single out the relevant ones is to require that the so-called Hadamard condition is fulfilled. Especially thanks to the work of Radzikowski \cite{Radzikowski:1996pa,Radzikowski:1996ei}, such condition has been translated in a constraint on the form of the wavefront set of the bidistribution associated to the two-point correlation function, built out of the state. The advantages of using the so-called {\em Hadamard states} are manifold, ranging from the finiteness of the quantum fluctuations of all observables to the possibility of constructing Wick polynomials following a covariant scheme -- see \cite{Khavkine:2014mta} for a recent exhaustive review. 

Once the Hadamard condition is assumed, several natural questions arise. The first concerns the existence of such states, a problem which was answered positively for free field theories (barring linearised gauge theories for which the method cannot be applied -- see for example  \cite{Benini:2014rya,Fredenhagen:2013cna,Gerard:2014jba,Wrochna:2014gia}) in \cite{fnw} by means of a spacetime deformation technique. 

The second is instead how to construct concretely Hadamard states. This problem has been investigated extensively in the last ten years and several options are available. If the underlying background is highly symmetric, such as the spacetimes of cosmological interest, one can resort to a mode expansion \cite{Olbermann:2007gn,Them:2013uka}, a technique which can be employed successfully to construct ground states of Hadamard form also if the metric is stationary \cite{Sahlmann:2000fh}. Other efficient techniques rely on pseudodifferential calculus \cite{Gerard:2012wb} or on the existence of a (conformal) null boundary of the underlying spacetime \cite{Dappiaggi:2005ci,Dappiaggi:2007mx,Dappiaggi:2008dk,Gerard:2014hla}. 

The third question, which is especially interesting from a physical point of view, is whether, in between the existing Hadamard states, there are some invariant under the action of all isometries of the underlying background. In general the answer to this question is negative, since there are well known cases, such as the massless, minimally coupled scalar field on globally hyperbolic, static spacetimes with compact Cauchy surfaces, when these states do not even exist. Even excluding these pathological scenarios, identifying isometry invariant, Hadamard states is not an easy task and a concrete way of doing it is known either in highly symmetric backgrounds, such as those of cosmological interest and those possessing a complete timelike Killing field, or for specific free fields with a special value of the mass and of the coupling of scalar curvature. An example is the massless and conformally coupled real scalar field on an asymptotically flat and globally hyperbolic spacetime for which Hadamard states, invariant under all isometries have been constructed in \cite{Dappiaggi:2005ci}.

Especially this last case leads to wonder whether, knowing a Hadamard, isometry invariant, algebraic state for a free field theory with given values of the parameters such as the mass and the coupling to scalar curvature, is not sufficient to build a counterpart with the same properties for other values of the parameters. The goal of this paper is to address this question. More precisely we shall tackle this problem for vector valued real scalar fields and a family of normally hyperbolic partial differential operators differing only for a smooth, formally self-adjoint potential $V$. Notice that this hypothesis includes in particular the case $V=m^2+\xi R$, $\xi\in\mathbb{R}$ and $R$ being the scalar curvature, which is in many cases the most interesting one. 
 We remark from the very beginning that we will address the whole problem from an abstract point of view, not entering into specific applications, such as for example constructing isometry invariant, Hadamard states for massive real scalar fields in asymptotically flat spacetimes, starting from the results of \cite{Dappiaggi:2005ci}. Such analysis would require a paper on its own and we plan to present it elsewhere. Furthermore we remark that, although we focus only on scalar fields, we expect that our procedure can be generalized almost straightforwardly to other cases, such as the Dirac and the Proca field. In the first case in particular the counterpart of our procedure might be related to other recent approaches \cite{Finster:2015wga}. 

The procedure, that we follow, elaborates on the initial stages of the approach followed in \cite{Dragon1} to prove the principle of perturbative agreement and most notably on the so-called M\o ller operator which is an intertwiner between the spaces of smooth solutions of free field theories whose dynamics is ruled by normally hyperbolic operators differing only for a smooth and compactly supported potential. Notice that, therefore, terms like a constant squared mass or a non minimal coupling to scalar curvature are excluded. 

As a first step in this paper we will show that, using carefully the domain of definition of the retarded and of the advanced Green operator of a normally hyperbolic PDO, the requirement on the support of the potential can be greatly relaxed, leading to what we call {\em extended M\o ller operator}. The main advantage of the extension is the possibility to combine it with the time-slice axiom in order to construct an explicit isomorphism between spaces of dynamical configurations for scalar fields whose dynamics differs by a term linear in the field, depending on a smooth external potential. Such isomorphism turns out to be non-canonical since it depends on the choice of a cut-off, function only of the time variable $t$. 

As a second step we will extend this new isomorphism to one between the $*$-algebras of fields of the two theories, showing as a by-product that it allows to pull-back states from one theory to the other. This procedure is remarkable since, if we start from a Hadamard state invariant under all background isometries, the pull-back preserves automatically the Hadamard condition and the invariance under all isometries whose associated Killing field commutes with the vector field built out of the normals to the Cauchy surfaces, individuating a foliation of the background, as specified in a theorem by Bernal and Sanchez -- see Section $2$. This result is already rather satisfactory since, in many concrete cases, only isometries of this kind are present. 

Yet, for the sake completeness and to check the robustness of our method, we investigate the possibility of removing the dependence on the cut-off function, a procedure which becomes especially important on stationary spacetimes when additional isometries are present. This limit procedure, which we will call {\em adiabatic limit}, can be well-defined only in a suitable weak sense. On the one hand we will show that, when we can take such limit, the resulting state will become invariant also under time translation. On the other hand, if we consider static, globally hyperbolic spacetimes, we can investigate the removal of the cut-off dependence at the level of modes. For the limit to be well-defined, we are able to write a sufficient condition, which is fulfilled on Minkowski spacetime. Furthermore, as a by-product, it turns out that, if one starts from a ground state, it ends up with a counterpart enjoying the same property.

The outline of the paper is as follows: In Section $2$ we introduce the extended M\o ller operator and we study its structural properties, showing in particular that it allows the identification of a non-canonical isomorphism between spaces of dynamical configurations for scalar fields whose dynamics differs by a term linear in the field, depending on a smooth, formally self-adjoint, external potential. In Section $3$, we extend such isomorphism at the level of the algebra of observables, while in Section $4$ we discuss the deformation argument at a level of algebraic states. In particular we discuss the Hadamard property, the interplay with the background isometries and the role of the cut-off function, which makes the above mentioned isomorphism non canonical. We discuss how such cut-off might be removed (adiabatic limit) and the additional properties of the algebraic state when such operation is well-defined. In the last Section, we discuss the adiabatic limit for static spacetimes by means of a mode expansion.

\section{The extended M\o ller operator - Classical configurations}\label{Section: The extended Moller operator - Classical configurations}

Consider an arbitrary finite dimensional vector bundle $E\equiv E[W,\pi,M]$ over a four-dimensional globally hyperbolic spacetime $M$.
Here $M$ comes endowed with a smooth Lorentzian metric $g$ of signature $(-,+,+,+)$ and its futher supposed to be oriented and time oriented.
We assume that $E$ is equipped with a fiberwise positive scalar product $\langle,\rangle_E$ and with a $\langle,\rangle_E$-compatible connection.
The latter one induces, together with the Levi-Civita connection, a connection $\nabla$ on $T^*M^{\otimes\ell}\otimes E$ for all $\ell\geq 0$.
Let $\Gamma(E)$, the space of smooth sections of $E$, be referred as the {\em space of kinematical configurations}.
In order to define a classical field theory we need to fix the dynamics, which we assume to be ruled by the following linear operator:
\begin{equation}\label{gen_dyn}
P_V\phi\doteq\left(P+V\right)\phi=0,\quad\phi\in\Gamma(E),
\end{equation}
where $P$ is a $\nabla$-compatible \cite[Lemma 1.5.5]{BGP}, normally hyperbolic operator while $V:\Gamma(E)\to\Gamma(E)$ is a linear term which, in every local coordinate neighbourhood $U$ of $M$, such that $E$ restricted thereon is trivial, is a smooth map from $U$ to $\textrm{End}(W)$. Observe that the choice of $\langle,\rangle_E$-compatible connection $\nabla$ has been made for later convenience, see Remark \ref{Remark: on induced bundles}, and since the prototype of a normally hyperbolic operator is $P\phi=\textrm{Tr}(\nabla^2\phi)$, $\phi\in\Gamma(E)$, namely the connection d'Alembert operator \cite[Example 1.5.2]{BGP} obtained through $\nabla$.
For all practical purposes, the main example that we have in mind consists of the case when $E=M\times\mathbb{R}$, $P=\square_g$ is the d'Alembert wave operator, while $V$ is a mass-like term such as $m^2+\xi R$, $\xi\in\mathbb{R}$, while $R$ is the scalar curvature of $(M,g)$.

Despite the fact that both $P,P_V$ are normally hyperbolic operators, the $V$-term in \eqref{gen_dyn} has been decoupled from $P$ since we want to consider it as a perturbation potential along the same lines followed by \cite{Dragon1} to prove the generalized principle of perturbative agreement. The key difference is that we do not require $V$ to be compactly supported.

Recall that, since $P_V$ is a normally hyperbolic operator, it possesses Green operators. Before defining them we introduce a class of notable functions:
\begin{definition}
We say that $f\in\Gamma(E)$ is {\bf past-compact} (-) or {\bf future compact} (+) if $\textrm{supp}(f)\cap J^\pm(p)$ is compact for all $p\in M$. We denote the space of these functions respectively as $\Gamma_{pc}(E)$ and $\Gamma_{fc}(E)$. We say that $f$ is {\bf timelike compact} if it is both past and future compact. We call $\Gamma_{tc}(E)\doteq\Gamma_{pc}(E)\cap\Gamma_{fc}(E)$.
\end{definition}

\noindent For every $\phi\in\Gamma(E)$ and for every $\alpha\in\Gamma_0(E)$, we introduce the pairing
\begin{equation}\label{pair1}
(\alpha,\phi)\doteq\int\limits_M d\mu_g \langle\alpha,\phi\rangle_E,
\end{equation}
where $\langle,\rangle_E$ is the scalar product of the fibres of $E$ and $d\mu_g$ is the metric-induced volume form. Henceforth we assume also that $V$ is formally self-adjoint with respect to the pairing \eqref{pair1}. Following \cite{Baernew,Sanders:2012ac},

\begin{definition}\label{Green}
We call {\em retarded Green operator}, $G^+_V:\Gamma_{pc}(E)\to\Gamma_{pc}(E)$ and {\em advanced Green operator}  $G^-_V:\Gamma_{fc}(E)\to\Gamma_{fc}(E)$ the unique maps such that 
$$P_V\circ G^\pm_V=G^\pm_V\circ P_V = id:\Gamma_{pc/fc}(E)\to\Gamma_{pc/fc}(E)$$
Additionally we call {\em causal propagator} $G_V\doteq G^+_V-G^-_V:\Gamma_{tc}(E)\to\Gamma(E)$. 
\end{definition}

Notice that the uniqueness of $G^\pm_V$ is a by-product of the results of \cite{Baernew,Sanders:2012ac}, once one notices that $P$ is a formally self-adjoint operator with respect to the pairing \eqref{pair1}. Additionally, since $P_V$ is normally hyperbolic, 
it holds that, for all $f\in\Gamma_{pc/fc}(E)$,
$$\textrm{supp}(G^\pm_V(f))\subseteq J^\pm(\textrm{supp}(f)).$$
The properties of the causal propagator can be summarized in the following short exact sequence of vector spaces:
\begin{equation*}
0\longrightarrow \Gamma_{tc}(E)\overset{P_V}{\longrightarrow} \Gamma_{tc}(E)\overset{G_V}{\longrightarrow} \Gamma(E)\overset{P_V}{\longrightarrow} \Gamma(E)\longrightarrow 0.
\end{equation*}
As by-product, if we call $\mathcal{S}_V(M)=\left\{\phi\in\Gamma(E)\;|\;P_V\phi=0\right\}$ the space of {\em dynamical configurations}, the following is an isomorphism of topological vector spaces:
\begin{equation}\label{iso1}
\mathcal{S}_V(M)\simeq\frac{\Gamma_{tc}(E)}{P_V\Gamma_{tc}(E)}.
\end{equation}
The isomorphism is realized via the causal propagator $G_V$ itself. 

We investigate whether it is possible to relate the spaces of dynamical configurations for different choices of the potential $V$. Notice that it is sufficient to answer to this question for the case when the dynamics is ruled on the one hand by $P$ and on the other hand by $P_V$ with any but fixed choice of $V\neq 0$. To this end, we generalize the so-called M\o ller operator -- see for example \cite{Dragon1}. As a first step we recall a notable property of globally hyperbolic spacetimes \cite{Bernal:2004gm,Geroch:1970uw}, here presented as in \cite[Th. 1.3.10]{BGP}:
\begin{proposition}\label{BS}
Let $(M,g)$ be a four dimensional, connected, orientable and time orientable spacetime. The following statements are equivalent:
\begin{enumerate}
\item $(M,g)$ is globally hyperbolic.
\item There exists a Cauchy surface in $(M,g)$. 
\item $(M,g)$ is isometric to $\mathbb{R}\times\Sigma$ endowed with the line element $ds^2=-\beta dt^2+h_t$ where $t:\mathbb{R}\times\Sigma\to\mathbb{R}$ is the projection on the first factor, $\beta$ is a smooth and strictly positive, scalar function on $\mathbb{R}\times\Sigma$ and $t\mapsto h_t$, $t\in\mathbb{R}$, yields a one-parameter family of smooth Riemannian metrics. Furthermore, for all $t\in\mathbb{R}$, $\{t\}\times\Sigma$ is a $3$-dimensional, spacelike, smooth Cauchy surface in $M$. 
\end{enumerate}
\end{proposition}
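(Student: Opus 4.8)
The plan is to prove the equivalences as a cycle $(3)\Rightarrow(2)\Rightarrow(1)\Rightarrow(3)$, with essentially all the content residing in the last implication. The step $(3)\Rightarrow(2)$ is immediate, since $(3)$ explicitly provides the slices $\{t\}\times\Sigma$, each asserted to be a smooth spacelike Cauchy surface. For $(2)\Rightarrow(1)$ I would invoke the classical causal-theoretic arguments going back to Geroch: if $\Sigma$ is a Cauchy surface then every inextensible causal curve meets it exactly once, which rules out closed causal curves and, with more work, forces the causal diamonds $J^+(p)\cap J^-(q)$ to be compact; together these are precisely the defining conditions of global hyperbolicity used in \cite{BGP}.

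The core is $(1)\Rightarrow(3)$, which I would split into a topological and a differentiable stage. The topological stage is Geroch's construction of a continuous Cauchy time function. Fix a finite Borel measure $\mu$ on $M$, with $\mu(M)<\infty$ and assigning positive mass to every nonempty open set, and set $V^\pm(p)\doteq\mu(J^\pm(p))$. Global hyperbolicity --- through compactness of the causal diamonds and the absence of almost-closed causal curves --- yields that $V^\pm$ are continuous, with $V^-$ strictly increasing and $V^+$ strictly decreasing along every future-directed causal curve, and tending to the appropriate limits along inextensible ones. Hence
\begin{equation*}
t\doteq\ln\frac{V^-}{V^+}
\end{equation*}
is a continuous function, strictly increasing along future-directed causal curves and exhausting $\mathbb{R}$ along each inextensible one, so that each level set is a Cauchy surface; flowing a complete timelike vector field transverse to these level sets then produces a homeomorphism $M\cong\mathbb{R}\times\Sigma$.

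The differentiable stage, due to Bernal and S\'anchez, is the genuine obstacle. Geroch's $t$ is in general only continuous and its level sets merely Lipschitz, so one must replace it by a \emph{smooth} $\tau$ whose differential $d\tau$ lies in the open timelike cone at every point; only then are the level sets $\Sigma_\tau$ smooth \emph{spacelike} Cauchy surfaces. The difficulty is that naive mollification of $t$ loses control of the causal character of the gradient, and the heart of the Bernal--S\'anchez method is to smooth while keeping $d\tau$ timelike, exploiting the convexity of the timelike cone so that suitably weighted averages of local (smoothed) time functions remain timelike. Granting such a $\tau$, I would normalize $t\doteq\tau$, let $n$ be the future-directed unit normal to the foliation $\{\Sigma_t\}_{t\in\mathbb{R}}$, and build coordinates by flowing along the vector field $X=\nabla\tau/g(\nabla\tau,\nabla\tau)$, which is proportional to $n$ and satisfies $d\tau(X)=1$, so that its flow parameter is exactly $\tau$. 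Because this flow is orthogonal to every slice, the metric acquires no $dt\,dx^i$ cross terms and takes the form $ds^2=-\beta\,dt^2+h_t$ with $\beta\doteq-g(\partial_t,\partial_t)>0$ smooth and strictly positive and $h_t$ the induced Riemannian metric on $\Sigma_t$. The remaining subtlety is to verify that this normal flow is complete and yields a global diffeomorphism onto $\mathbb{R}\times\Sigma$ rather than degenerating at focal points, which is again supplied by the Bernal--S\'anchez construction. As the statement is quoted from \cite[Th.~1.3.10]{BGP}, in the paper I would simply cite it, the above being the conceptual skeleton underlying that reference.
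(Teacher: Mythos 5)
The paper offers no proof of this proposition at all: it is recalled as a known result, cited from \cite{Bernal:2004gm,Geroch:1970uw} as presented in \cite[Th.~1.3.10]{BGP}, and your proposal ends by doing exactly the same, so you take essentially the same approach. Your sketch of the underlying classical argument --- Geroch's volume-function time $t=\ln(V^-/V^+)$ for the topological splitting, the Bernal--S\'anchez smoothing to obtain a smooth time function with timelike differential, and the orthogonal gradient flow producing the form $ds^2=-\beta\,dt^2+h_t$ --- is an accurate outline of what those references establish.
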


\noindent In order to simplify several statements, we fix once and for all the foliation of $(M,g)$ as $\bR\times\Sigma$ as well as a time coordinate $t$ as in Proposition \ref{BS}. Furthermore we introduce the following auxiliary structure,
\begin{definition}\label{Cauchy}
Let $(M,g)$ be a globally hyperbolic spacetime. We call a subset $\mathcal{N}$ of $M$ a {\bf Cauchy neighbourhood} if $(\mathcal{N},g|_{\mathcal{N}})$ is a globally hyperbolic submanifold of $M$ which contains a whole Cauchy surface of $M$. A smooth, past compact, function on $M$, depending only on the variable $t$, is called a {\em partition of unity subordinated to $\mathcal{N}$}, if it is equal to $1$ on $J^+(\mathcal{N})\setminus\mathcal{N}$ and to $0$ on $J^-(\mathcal{N})\setminus\mathcal{N}$.
\end{definition}

Henceforth we shall assume that a fixed Cauchy neighbourhood $\mathcal{N}$ has been taken and we introduce $\chi^+$, a partition of unity subordinated to it. At the same time we call $\chi^-(t)\doteq 1-\chi^+(t)\in C^\infty_{fc}(M)$. The following theorem is an important, albeit intermediate, step in our construction:
\begin{theorem}\label{DDtheorem}
Let  $\mathcal{S}_{V^+}(M)$ and $\mathcal{S}(M)$ be the space of dynamical configurations for \eqref{gen_dyn} with $V^+\doteq V\chi^+(t)\neq 0$ in the first case and for $V=0$ in the second one. We call {\bf extended M\o ller operator} the map 
\begin{equation}\label{DD}
R_{0,V^+}\doteq\mathbb{I}-G^+_{V^+}\,V^+ :\Gamma(E)\to\Gamma(E),
\end{equation}
where $G^+_{V^+}$ is the retarded Green operator of $P_{V^+}=P+V^+$ as in Definition \ref{Green}. $R_{0,V^+}$ is 
\begin{enumerate}
\item an automorphism of $\Gamma(E)$ whose inverse is $R^{-1}_{0,V^+}:\Gamma(E)\to\Gamma(E)$ such that
\begin{equation}\label{DD-1}
R_{0,V^+}^{-1}\varphi = \left(\mathbb{I}+G^+ V^+\right)\varphi,\quad\forall\varphi\in\Gamma(E)
\end{equation}
where $G^+$ is the retarded Green operator of $P$. 
\item an intertwiner between $P\equiv P_0$ and $P_{V^+}$, namely
$$P_{V^+}\circ R_{0,V^+}=P.$$
In other words $R_{0,V^+}$ implements a non-canonical isomorphism between $\mathcal{S}(M)$ and $\mathcal{S}_{V^+}(M)$.
\end{enumerate}
\end{theorem}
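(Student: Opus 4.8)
The plan is to prove the two assertions by direct computation, the only genuinely delicate point being to keep track of supports so that every composition lands in a space on which the relevant Green operator is defined. First I would verify that $R_{0,V^+}$ and the candidate inverse $\mathbb{I}+G^+V^+$ map $\Gamma(E)$ into itself. By construction $V^+=V\chi^+(t)$ has support contained in $\textrm{supp}(\chi^+)$, and since $\chi^+$ is a past-compact function, $\textrm{supp}(\chi^+)\cap J^-(p)$ is compact for every $p\in M$. Hence for any $\varphi\in\Gamma(E)$ the section $V^+\varphi$ lies in $\Gamma_{pc}(E)$, so that both $G^+_{V^+}(V^+\varphi)$ and $G^+(V^+\varphi)$ are well-defined smooth sections. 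This is precisely the observation that upgrades the classical M\o ller operator, available only for compactly supported potentials, to the extended one: the cut-off $\chi^+$ supplies the past-compactness that the potential itself need not possess. The same bookkeeping shows that the iterated compositions appearing below, such as $G^+_{V^+}V^+G^+V^+\varphi$, remain inside $\Gamma_{pc}(E)$, since $G^\pm$ preserve past-compactness and $V^+$ only shrinks supports into $\textrm{supp}(\chi^+)$.

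Next I would establish the intertwining property. Using $P_{V^+}\circ G^+_{V^+}=\mathbb{I}$ on $\Gamma_{pc}(E)$ together with $V^+\varphi\in\Gamma_{pc}(E)$, one computes
\begin{equation*}
P_{V^+}R_{0,V^+}\varphi=P_{V^+}\varphi-P_{V^+}G^+_{V^+}(V^+\varphi)=(P+V^+)\varphi-V^+\varphi=P\varphi,
\end{equation*}
which is the claimed identity $P_{V^+}\circ R_{0,V^+}=P$. Symmetrically, using $P\circ G^+=\mathbb{I}$, the candidate inverse satisfies $P\circ(\mathbb{I}+G^+V^+)=P+V^+=P_{V^+}$. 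Together these two intertwining relations show that $R_{0,V^+}$ maps $\mathcal{S}(M)$ into $\mathcal{S}_{V^+}(M)$ and $\mathbb{I}+G^+V^+$ maps $\mathcal{S}_{V^+}(M)$ into $\mathcal{S}(M)$, so that once invertibility is established we obtain the asserted isomorphism of solution spaces.

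It then remains to show that $\mathbb{I}+G^+V^+$ is a two-sided inverse of $R_{0,V^+}$. The key auxiliary tools are the two resolvent-type identities relating the Green operators of $P$ and $P_{V^+}$: inserting $P_{V^+}=P+V^+$ into the trivial relations $G^+=G^+P_{V^+}G^+_{V^+}$ and $G^+=G^+_{V^+}P_{V^+}G^+$ (valid on $\Gamma_{pc}(E)$) and using $G^+P=PG^+=\mathbb{I}$ there yields
\begin{equation*}
G^+_{V^+}=G^+-G^+V^+G^+_{V^+}=G^+-G^+_{V^+}V^+G^+.
\end{equation*}
Multiplying these on the right by $V^+$ and substituting into the expansions of $(\mathbb{I}-G^+_{V^+}V^+)(\mathbb{I}+G^+V^+)$ and $(\mathbb{I}+G^+V^+)(\mathbb{I}-G^+_{V^+}V^+)$, the cross terms cancel in pairs and both products collapse to $\mathbb{I}$. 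This proves formula \eqref{DD-1} and establishes that $R_{0,V^+}$ is an automorphism of $\Gamma(E)$.

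The main obstacle I anticipate is not the algebra, which becomes routine once the resolvent identities are in place, but rather the support control: at each step one must check that the arguments of $G^+$ and $G^+_{V^+}$ genuinely lie in $\Gamma_{pc}(E)$, since these operators, and hence the resolvent identities used above, are only defined there. The past-compactness of $\chi^+$ is exactly what makes this control possible, and it is the reason the whole construction goes through for a potential $V^+$ that is merely past-compact rather than compactly supported.
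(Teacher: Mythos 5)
Your proof is correct and follows essentially the same route as the paper's: well-definedness via past-compactness of $\chi^+\varphi$ (hence of $V^+\varphi$), the intertwining identity from $P_{V^+}\circ G^+_{V^+}=\mathbb{I}$ on $\Gamma_{pc}(E)$, and invertibility via the substitution $V^+=P_{V^+}-P$ --- your resolvent identities are precisely the paper's inline cancellation $G^+_{V^+}V^+G^+V^+=G^+_{V^+}(P_{V^+}-P)G^+V^+=G^+V^+-G^+_{V^+}V^+$, merely packaged as a separate preliminary step. The only difference is organizational: you verify both one-sided inverses explicitly through the two resolvent identities, whereas the paper computes the right inverse and omits the symmetric left-inverse calculation.
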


\begin{proof}
The first step consists of noticing that the extended M\o ller operator is well-defined. As a matter of fact, if $\phi\in\Gamma(E)$, it turns out that $\chi^+\phi\in\Gamma_{pc}(E)$, {\it i.e.} it lies in the domain of $G^+_{V^+}$. The smoothness of $R_{0,V^+}\phi$ is a 
by-product of the regularity properties of the advanced and retarded Green operators. To prove the first statement it suffices to show that $\mathbb{I}+G^+ V^+$ is both a left and a right inverse of $R_{0,V^+}$ on $\Gamma(E)$. Notice that the candidate inverse is well-defined on all elements of $\Gamma(E)$ for the same reason for which $R_{0,V^+}$, is. Hence on $\Gamma(E)$ it holds
\begin{align*}
R_{0,V^+}\circ R_{0,V^+}^{-1}= \left(\mathbb{I}-G^+_{V^+}\,V^+\right)\left(\mathbb{I}+G^+\,V^+\right)=\\
\mathbb{I}-G^+_{V^+}\,V^+ +G^+\,V^+ - G^+_{V^+}\,(P_{V^+}-P) G^+\,V^+ = \\
\mathbb{I}-G^+_{V^+}\,V^+ +G^+\,V^+ - G^+ V^+ + G^+_{V^+} V^+ = \mathbb{I},
\end{align*}
where we used both the identity $P_{V^+}-P=V^+$ and that $G^+V^+\varphi\in\Gamma_{pc}(E)$ for any $\varphi\in\Gamma(E)$. Therefore we are entitled to use the identities $P\circ G^+ = id|_{\Gamma_{pc}(E)}$ and $G^+_{V^+}\circ P_{V^+}=id|_{\Gamma_{pc}(E)}$. Notice that we have shown only that $R_{0,V^+}^{-1}$ is a right inverse of $R_{0,V^+}$, but the procedure to prove that it is also a left inverse is the same and, therefore, we shall omit it.

The second statement, namely $P_{V^+}\circ R_{0,V^+}=P$ descends from Definition \ref{Green} and from the following chain of identities:
\begin{equation}\label{aux1}
P_{V^+}\circ R_{0,V^+}=P+V^+ - (P_{V^+}\circ G^+_{V^+})\,V^+ =P+V^+-V^+=P.
\end{equation}
In other words, if $\phi\in\mathcal{S}(M)$, then $R_{0,V^+}\phi\in\mathcal{S}_{V^+}(M)$. Therefore the restriction of $R_{0,V^+}$ to $\mathcal{S}(M)$ identifies a non-canonical isomorphism between $\mathcal{S}(M)$ and $\mathcal{S}_{V^+}(M)$. Observe that \eqref{aux1} entails that $R_{0,V^+}$ is surjective since, for all $\varphi\in\mathcal{S}_{V^+}(M)$, $\phi\doteq R^{-1}_{0,V^+}\varphi\in\mathcal{S}(M)$ and $R_{0,V^+}\phi=\varphi$.
\end{proof}
With a slight abuse of notation, henceforth we shall indicate with $R_{0,V^+}$ the restriction of the extended M\o ller operator to $\mathcal{S}(M)$. The adjective {\em extended} comes from a comparison with the M\o ller operator used in \cite{Dragon1}. In this case it was required that the potential is compactly supported and smooth. 
\begin{remark}
Notice that the isomorphism implemented in \eqref{DD} is manifestly non-canonical since it depends explicitly on the choice of $\chi^+$, which, in turn, depends on the choice of $\mathcal{N}$. Also the choice of using the retarded Green operator is purely conventional. We could have considered the advanced Green operator with the caveat of replacing everywhere $\chi^+$ with $\chi^-\doteq 1-\chi^+$.
\end{remark}

In order to get rid of the localization in time of $V$, we will exploit a notable property of the space of smooth solutions to any free field theory, namely the time-slice axiom \cite{BGP}, which we restate in a slightly unconventional form.  

\begin{proposition}\label{Prop_timeslice}
Let $M$ be a globally hyperbolic spacetime and let the dynamics be ruled by \eqref{gen_dyn}. Choose a Cauchy neighbourhood $\mathcal{O}\subset M$ and $\eta^+\equiv\eta^+(t)$, a partition of unity subordinated to it, as per Definition \ref{Cauchy}.Then the {\bf time-slice operator}
\begin{equation}\label{timeslice}
TS^V:\mathcal{S}_V(M)\to\frac{\Gamma_{tc}(E)}{P_V[\Gamma_{tc}(E)]},\quad \varphi\mapsto TS^V(\varphi)=[P_V\eta^+\varphi],
\end{equation}
is an isomorphism of vector spaces. The inverse operator $TS^{V,-1}:\frac{\Gamma_{tc}(E)}{P[\Gamma_{tc}(E)]}\to\mathcal{S}_V(M)$ is  $[\alpha]\mapsto TS^{V,-1}[\alpha]\doteq G_V\alpha$, where $G_V$ is the causal propagator of $P_V$. 

Furthermore, let $\mathcal{S}_{V,sc}(M)$ be the collection of all smooth and spacelike compact solutions to \eqref{gen_dyn}, namely $\varphi\in\mathcal{S}_{V,sc}(M)$ if and only if $\supp(\varphi)\cap\Sigma$ is compact for any choice of a Cauchy surface of $M$. Then $TS^V|_{\mathcal{S}_{V,sc}(M)}$ is an isomorphism between $\mathcal{S}_{V,sc}(M)$ and $\frac{\Gamma_0(E)}{P[\Gamma_0(E)]}$.
\end{proposition}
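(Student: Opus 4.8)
The plan is to check three things in turn: that $TS^V$ actually lands in the stated quotient, that the proposed map $[\alpha]\mapsto G_V\alpha$ is a two-sided inverse, and finally that the whole argument restricts verbatim to the spacelike compact sector with $\Gamma_{tc}(E)$ replaced by $\Gamma_0(E)$. For well-definedness I would argue as follows. Since $\varphi\in\mathcal{S}_V(M)$ solves $P_V\varphi=0$ and $P_V$ is a differential operator, on the region where $\eta^+$ is locally constant (equal to $1$ or to $0$) one has $P_V(\eta^+\varphi)=\eta^+P_V\varphi=0$; hence $P_V(\eta^+\varphi)$ is supported in $\supp(d\eta^+)$. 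Because $\eta^+$ depends only on $t$ and equals $0$ on $J^-(\mathcal{O})\setminus\mathcal{O}$ and $1$ on $J^+(\mathcal{O})\setminus\mathcal{O}$, it is constant outside a bounded strip $\{t_-\le t\le t_+\}$, which is the intersection of the causal future of one Cauchy surface with the causal past of another and is therefore timelike compact; thus $P_V\eta^+\varphi\in\Gamma_{tc}(E)$. That $G_V$ descends to the quotient and maps into $\mathcal{S}_V(M)$ is immediate from $G_V\circ P_V=0$ on $\Gamma_{tc}(E)$ and $P_V\circ G_V=0$, both part of the short exact sequence recalled above.

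The core step is the identity $TS^{V,-1}\circ TS^V=\mathbb{I}$. Writing $\eta^-\doteq 1-\eta^+\in C^\infty_{fc}(M)$, I would use that $\eta^+$ is past compact and $\eta^-$ future compact to obtain $\eta^+\varphi\in\Gamma_{pc}(E)$ and $\eta^-\varphi\in\Gamma_{fc}(E)$, so Definition~\ref{Green} applies on the correct domains and gives $G^+_V\bigl(P_V\eta^+\varphi\bigr)=\eta^+\varphi$ and $G^-_V\bigl(P_V\eta^-\varphi\bigr)=\eta^-\varphi$. Since $P_V\varphi=0$ forces $P_V(\eta^+\varphi)=-P_V(\eta^-\varphi)$, expanding $G_V=G^+_V-G^-_V$ yields $G_V\bigl(P_V\eta^+\varphi\bigr)=\eta^+\varphi+\eta^-\varphi=\varphi$. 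This already proves injectivity of $TS^V$ and surjectivity of $G_V$.

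For the reverse composition I would exhibit an explicit primitive. Setting $u^\pm\doteq G^\pm_V\alpha$ one has $P_Vu^\pm=\alpha$, and starting from $\alpha=P_Vu^+$ a direct expansion gives $\alpha-P_V(\eta^+G_V\alpha)=P_V\beta$ with $\beta\doteq\eta^-u^++\eta^+u^-$. The point is then to check $\beta\in\Gamma_{tc}(E)$: past compactness of $\eta^-u^+$ comes from $u^+\in\Gamma_{pc}(E)$ and its future compactness from $\eta^-\in C^\infty_{fc}(M)$, and symmetrically $\eta^+u^-$ is future compact via $u^-\in\Gamma_{fc}(E)$ and past compact via $\eta^+$; hence $[P_V\eta^+G_V\alpha]=[\alpha]$, giving $TS^V\circ TS^{V,-1}=\mathbb{I}$. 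The spacelike compact statement follows by rerunning these two computations with sharper support tracking: if $\varphi\in\mathcal{S}_{V,sc}(M)$ then $\supp\varphi$ is spacelike compact and meets the timelike compact strip $\supp(d\eta^+)$ in a compact set, so $P_V\eta^+\varphi\in\Gamma_0(E)$; conversely $\supp(G_V\alpha)\subseteq J^+(\supp\alpha)\cup J^-(\supp\alpha)$ shows $G_V$ sends $\Gamma_0(E)$ into $\mathcal{S}_{V,sc}(M)$, and the same $\beta=\eta^-u^++\eta^+u^-$ is now compactly supported because each $u^\pm$ is supported in $J^\pm(\supp\alpha)$ and is cut off by $\eta^\mp$ to a compact region.

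The only genuinely delicate point is the bookkeeping of support classes: one must apply each $G^\pm_V$ strictly on its natural domain $\Gamma_{pc}(E)$ respectively $\Gamma_{fc}(E)$, and verify that the correction term $\beta$ really lands in $\Gamma_{tc}(E)$ (respectively $\Gamma_0(E)$), which is exactly where the interplay between the causal support of the Green operators and the past/future compactness of $\eta^\pm$ does the work. Beyond this careful matching of supports there is no deeper obstacle, since the statement is the standard time-slice argument adapted to the decomposition $\mathbb{I}=\eta^++\eta^-$.
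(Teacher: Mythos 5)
Your proof is correct and follows essentially the same route the paper relies on: the paper writes out no proof of its own, instead citing \cite[Theorem 3.4]{Benini:2015bsa} for the spacelike compact case and remarking that the general case is ``a direct consequence of the properties of the domain of definition of the advanced and of the retarded Green operators.'' Your argument --- the decomposition $1=\eta^++\eta^-$, the identities $G^\pm_V\circ P_V=\mathrm{id}$ on $\Gamma_{pc/fc}(E)$ giving $G_V(P_V\eta^+\varphi)=\varphi$, and the support bookkeeping showing $\beta=\eta^-u^++\eta^+u^-$ lies in $\Gamma_{tc}(E)$ (respectively $\Gamma_0(E)$) --- is precisely that standard argument written out in full, with the domain/support considerations handled correctly.
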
 

A proof of the properties of the time-slice operator has been given in \cite[Theorem 3.4]{Benini:2015bsa} for spacelike compact, smooth solutions to the Klein-Gordon equation. The extension to the case considered in the above proposition is a direct consequence of the properties of the domain of definition of the advanced and of the retarded Green operators.

\begin{remark}
Notice that fixing a Cauchy neighbourhood $\mathcal{O}$ in Proposition \ref{Prop_timeslice} implies actually that we can even consider the time-slice operator as an isomorphism between $\mathcal{S}_V(M)$ and $\frac{\Gamma_{tc}(E|_{\mathcal{O}})}{P_V[\Gamma_{tc}(E|_{\mathcal{O}})]}$, where $E|_{\mathcal{O}}$ is the restriction of the vector bundle $E$.
Different choices of $\eta^+$ with the same support properties, say $\eta^+_1$ and $\eta^+_2$ do not affect \eqref{timeslice} since, for every $\varphi\in\mathcal{S}_V(M)$, $(\eta^+_1-\eta^+_2)\varphi\in\Gamma_{tc}(E|_{\mathcal{O}})$. Hence $P_V((\eta^+_1-\eta^+_2)\varphi)$ is a representative of the trivial equivalence class in $\frac{\Gamma_{tc}(E|_{\mathcal{O}})}{P_V[\Gamma_{tc}(E|_{\mathcal{O}})]}$ .

In our analysis this perspective will become useful and, to distinguish it, we will indicate it via 
$$TS^V_{\mathcal{O}}:\mathcal{S}_V(M)\to\frac{\Gamma_{tc}(E|_{\mathcal{O}})}{P_V[\Gamma_{tc}(E|_{\mathcal{O}})]},\quad \varphi\mapsto TS^V(\varphi)=[P_V\eta^+\varphi].$$
Notice that $TS^V_{\mathcal{O}}$ is an isomorphism whose inverse $TS^{V,-1}_{\mathcal{O}}$ is still implemented by $G_V$, the causal propagator of $V$ acting on (the representatives of) the equivalence classes of $\frac{\Gamma_{tc}(E|_{\mathcal{O}})}{P_V[\Gamma_{tc}(E|_{\mathcal{O}})]}$. With a slight abuse of notation we will often indicate $G_V$ in place of $TS^{V,-1}_{\mathcal{O}}$.
\end{remark}

Assume that $\mathcal{O}\subset J^+(\mathcal{N})\setminus\mathcal{N}$, then for every $\alpha\in \Gamma_{tc}(E|_{\mathcal{O}})$, $P_{V^+} \alpha= P_V \alpha$ since $\chi^+$ is identically $1$ on $\mathcal{O}$. In other words $P_{V^+}[\Gamma_{tc}(E|_{\mathcal{O}})]$ is isomorphic to $P_V[\Gamma_{tc}(E|_{\mathcal{O}})]$. From this it descends the following statement

\begin{proposition}\label{alliso}
Let $M$ be a globally hyperbolic spacetime, $\mathcal{N},\mathcal{O}$ two Cauchy neighbourhoods of $M$. Let $\mathcal{O}\subset J^+(\mathcal{N})\setminus\mathcal{N}$ and let $\chi^+\equiv\chi^+(t)$ be a a partition of unity subordinated to $\mathcal{N}$. Then, the following is a chain of isomorphisms:
\begin{equation}\label{Isomain}
\mathcal{S}(M)\mathrel{\mathop{\myrightleftarrows}^{R_{0,V^+}}_{R^{-1}_{0,V^+}}}\mathcal{S}_{V^+}(M)\mathrel{\mathop{\myrightleftarrows}^{TS^{V^+}_{\mathcal{O}}}_{G_{V^+}}}\frac{\Gamma_{tc}(E|_{\mathcal{O}})}{P_{V^+}[\Gamma_{tc}(E|_{\mathcal{O}})]}\mathrel{\mathop{\myrightleftarrows}^{{\chi^+|_{\mathcal{O}}=1}}_{id|_{\mathcal{O}}}}\frac{\Gamma_{tc}(E|_{\mathcal{O}})}{P_V[\Gamma_{tc}(E|_{\mathcal{O}})]}
\mathrel{\mathop{\myrightleftarrows}^{G_V}_{TS^V_{\mathcal{O}}}}\mathcal{S}_V(M),
\end{equation}
where $\mathcal{S}_V(M)$ is the space of smooth solutions to \eqref{gen_dyn}, while $TS^{V^+}_\mathcal{O}$ is defined as in \eqref{timeslice}.
\end{proposition}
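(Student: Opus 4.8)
The statement asserts that the composite displayed in \eqref{Isomain} is an isomorphism of vector spaces, and the natural strategy is to exploit the fact that a composite of isomorphisms is again an isomorphism. Thus I would verify separately that each of the four consecutive links is an isomorphism, reading the diagram from left to right: first the extended M\o ller operator $R_{0,V^+}:\mathcal{S}(M)\to\mathcal{S}_{V^+}(M)$; second the localized time-slice operator $TS^{V^+}_{\mathcal{O}}:\mathcal{S}_{V^+}(M)\to\Gamma_{tc}(E|_{\mathcal{O}})/P_{V^+}[\Gamma_{tc}(E|_{\mathcal{O}})]$; third the map between the two quotients induced by the identity on representatives; and fourth the inverse time-slice operator $G_V$ landing in $\mathcal{S}_V(M)$.

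Three of these links are furnished by results already established. The first link is an isomorphism by Theorem \ref{DDtheorem}, whose inverse is the operator $R^{-1}_{0,V^+}$ of \eqref{DD-1}. The second and fourth links are instances of the time-slice isomorphism: applying Proposition \ref{Prop_timeslice} with potential $V^+$, together with the localized reformulation $TS^{V}_{\mathcal{O}}$ discussed in the ensuing remark, gives that $TS^{V^+}_{\mathcal{O}}$ is an isomorphism with inverse implemented by the causal propagator $G_{V^+}$; applying the same proposition with potential $V$ gives that $TS^{V}_{\mathcal{O}}$ is an isomorphism with inverse $G_V$, which is precisely the fourth link read backwards. Hence nothing new is required for these three arrows beyond quoting the relevant statements.

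The only genuinely new point, and the step on which the whole chain hinges, is the third link: that the identity on representatives descends to a well-defined isomorphism
\[
\frac{\Gamma_{tc}(E|_{\mathcal{O}})}{P_{V^+}[\Gamma_{tc}(E|_{\mathcal{O}})]}\longrightarrow\frac{\Gamma_{tc}(E|_{\mathcal{O}})}{P_{V}[\Gamma_{tc}(E|_{\mathcal{O}})]}.
\]
To see this I would use the support hypothesis $\mathcal{O}\subset J^+(\mathcal{N})\setminus\mathcal{N}$: by Definition \ref{Cauchy} the cut-off $\chi^+$ equals $1$ identically on $J^+(\mathcal{N})\setminus\mathcal{N}$, so that $V^+=V\chi^+$ coincides with $V$ on $\mathcal{O}$. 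Consequently, for every representative $\alpha\in\Gamma_{tc}(E|_{\mathcal{O}})$, whose support lies in $\mathcal{O}$, one has $V^+\alpha=V\alpha$ and therefore $P_{V^+}\alpha=P_V\alpha$. This shows that the two subspaces $P_{V^+}[\Gamma_{tc}(E|_{\mathcal{O}})]$ and $P_V[\Gamma_{tc}(E|_{\mathcal{O}})]$ of $\Gamma_{tc}(E|_{\mathcal{O}})$ actually coincide, so that the two quotients are literally the same vector space and the identity on representatives is trivially a (self-inverse) isomorphism.

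I do not expect a substantial obstacle here, since the argument is an assembly of earlier results glued by a one-line support computation. The only points requiring care are bookkeeping ones: checking that $\mathcal{O}$ can indeed be chosen as a Cauchy neighbourhood contained in $J^+(\mathcal{N})\setminus\mathcal{N}$, which is possible by taking $\mathcal{O}$ a thin neighbourhood of a Cauchy surface lying strictly to the future of $\mathcal{N}$, and confirming that the identification in the third link is compatible with the two distinct inverse maps $G_{V^+}$ and $G_V$ appearing on either side, so that running through the chain and back reproduces the original configuration. Composing the four links then yields the desired non-canonical isomorphism between $\mathcal{S}(M)$ and $\mathcal{S}_V(M)$.
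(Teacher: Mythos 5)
Your proposal is correct and follows essentially the same route as the paper: the first, second and fourth links are quoted from Theorem \ref{DDtheorem} and Proposition \ref{Prop_timeslice} (in its localized form), and the middle link is exactly the paper's preceding observation that $\chi^+\equiv 1$ on $\mathcal{O}\subset J^+(\mathcal{N})\setminus\mathcal{N}$ forces $P_{V^+}\alpha=P_V\alpha$ for all $\alpha\in\Gamma_{tc}(E|_{\mathcal{O}})$, so the two quotients coincide. Nothing is missing; your write-up is in fact slightly more explicit than the paper's own two-line proof.
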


\begin{proof}
All isomorphisms have been already proven or they descend from the general theory presented for example in \cite{BGP}. The only necessary remark is that \eqref{gen_dyn} is a linear wave equation and hence we can apply to it Proposition \ref{Prop_timeslice}.
\end{proof}

We introduce a symbol for the isomorphism established in Proposition \ref{alliso}, that is 
\begin{eqnarray}\label{mainiso}
\mathcal{R}_{0,V^+}:\mathcal{S}(M)\to\mathcal{S}_V(M).\\
\mathcal{R}_{0,V^+}(\varphi)=\left(G_V\circ TS^{V^+}_\mathcal{O}\circ R_{0,V^+}\right)\varphi,\notag
\end{eqnarray}
where we have translated in formulas \eqref{Isomain}. Notice that we kept the dependence on $\chi$ explicit since it reminds us that this $1:1$ map is non-canonical.

\section{The extended M\o ller operator - Algebra of observables}

In this section we want to extend the isomorphism \eqref{mainiso} from the level of smooth dynamical configurations to that of the algebra of observables. The construction of the latter is based on a well-established procedure, which we will recapitulate succinctly. For concreteness we will consider the dynamics ruled by \eqref{gen_dyn}, but, at all steps, the reader is free to replace $V$ with either $V^+$ or even with $0$. We shall follow mainly \cite{Benini:2013fia}. 

As a starting point we define a family of functionals, labelled by smooth and compactly supported functions on $M$, that is, for all $f\in \Gamma_0(E)$,
$$F_f:\Gamma(E)\to\mathbb{R}\quad \varphi\mapsto F_f(\varphi)=\int\limits_M d\mu_g \langle f, \varphi\rangle_E,$$
where $d\mu_g$ is the metric-induced volume measure and $\langle,\rangle_E$ the fiberwise scalar product on $E$. The next step consists of considering the restriction of the above functionals from $\Gamma(E)$ to $\mathcal{S}_V(M)$, the dynamically allowed configurations. As a consequence we individuate redundant functionals. In other words $F_f(\varphi)=0$ for all $\varphi\in\mathcal{S}_V(M)$ if and only if there exists $h\in \Gamma_0(E)$ such that $f=P_V h$. Consequently we introduce the following class of functionals, called {\em linear classical observables}
\begin{equation}\label{clobs}
\frac{\Gamma_0(E)}{P_V[\Gamma_0(E)]}\ni[f]\mapsto F_{[f]}:\mathcal{S}_V(M)\to\bR,\quad\varphi\mapsto F_{[f]}(\varphi)\doteq F_f(\varphi).
\end{equation}

This collection forms a vector space $\mathcal{E}^{obs}_V(M)$ which separates all configurations since, as a consequence of Hahn-Banach theorem, $\Gamma_0(E)$ separates the points in $\Gamma(E)$ with respect to the pairing \eqref{pair1}. 

Additionally we notice that $\mathcal{E}^{obs}_V(M)$ comes endowed with a weakly non-degenerate symplectic form:
\begin{equation}\label{sympl}
\sigma_V:\mathcal{E}^{obs}_V(M)\times\mathcal{E}^{obs}_V(M)\to\mathbb{R}\quad \left(F_{[f]},F_{[f^\prime]}\right)\mapsto\sigma_V(F_{[f]},F_{[f^\prime]})=\int\limits_M d\mu_g \langle f,G_V(f^\prime)\rangle_E,
\end{equation}
where $G_V$ is the causal propagator associated to $P_V$. The compactness of the support of both $f$ and $f^\prime$ guarantees the finiteness of the integral, while the properties of $G_V$ entail both skew-symmetry and the independence from the choice of representatives.
Starting from these premises we introduce the following unital $*$-algebra
$$\mathcal{T}^{obs}_V(M)=\bigoplus_{n=0}^\infty\mathcal{E}^{obs}_V(M;\mathbb{C})^{\otimes n},$$
where $\mathcal{E}^{obs}_V(M;\mathbb{C})\doteq\mathcal{E}^{obs}_V(M)\otimes\mathbb{C}$ while $\mathcal{E}^{obs}_V(M;\mathbb{C})^{\otimes 0}\doteq\mathbb{C}$. The $*$-operation is the standard extension of complex conjugation to the tensor product, which is unambiguously and fully determined by the relation $(F_{[f]}\otimes F_{[g]})^* = F_{\overline{[g]}}\otimes F_{\overline{[f]}}$, for all pairs of $[f],[g]\in\frac{\Gamma_0(E)}{P_V[\Gamma_0(E)]}$. 
As a last step we encode the canonical commutation relations by taking the quotient with respect to the $*$-ideal $\mathcal{I}_V(M)$ generated by $F_{[f]}\otimes F_{[f^\prime]}-F_{[f^\prime]}\otimes F_{[f]}-i\sigma_V(F_{[f]},F_{[f^\prime]})\mathbb{I}$, where $\sigma_V$ is given in \eqref{sympl} and $\mathbb{I}$ is the unit in $\mathcal{T}^{obs}_V(M)$. The end-point is the {\em algebra of observables}
\begin{equation}\label{algebraobs}
\mathcal{A}^{obs}_V(M)\doteq\frac{\mathcal{T}^{obs}_V(M)}{\mathcal{I}_V(M)}.
\end{equation}
Notice that, since $\mathcal{E}^{obs}_V(M)$ is isomorphic to its labelling space $\frac{\Gamma_0(E)}{P_V[\Gamma_0(E)]}$, then \eqref{algebraobs} coincides with the standard algebra of fields, whose construction can be found for example in \cite{BGP, Benini:2013fia}. We kept the subscript $V$ explicit through all our discussion to emphasize its role in building all the relevant structures. Henceforth we will switch often between $\mathcal{E}^{obs}_V(M)$ and its labelling space without mentioning explicitly the underlying isomorphism.

We are now ready to make contact with the analysis of the previous section by using \eqref{mainiso} in \eqref{clobs}. The result is the following family of functionals 
$$F_{[f]}\circ\mathcal{R}_{0,V^+}:\mathcal{S}(M)\to\mathbb{R}\quad [f]\in\frac{\Gamma_0(E)}{P_V[\Gamma_0(E)]}.$$
Notice that, in view of Theorem \ref{DDtheorem} and of Proposition \ref{Isomain}, each $F_{[f]}\in\mathcal{E}^{obs}(M)$. In order to characterize the properties of the whole family we need first two ancillary lemmas, aimed at studying the formal dual of the time-slice operator and of the extended M\o ller operator:
\begin{lemma}\label{Lemma_dualTS}
Let $\widetilde{(,)}:\mathcal{S}_{V,sc}(M)\times\frac{\Gamma_{tc}(E)}{P_V[\Gamma_{tc}(E)]}\to\mathbb{R}$ be the pairing
$$\widetilde{(\psi,[\alpha])}=(\psi,\alpha)\doteq\int\limits_M d\mu_g\langle\psi,\alpha\rangle_E.$$
Let $TS^V$ be the time-slice operator as in \eqref{timeslice}.  Then the formal dual of $TS^V$ is the operator $TS^{V,*}:\mathcal{S}_{V,sc}(M)\to \frac{\Gamma_0(E)}{P_V[\Gamma_0(E)]}$ such that 
$$TS^{V,*}(\psi)=[P_V\rho^-\psi].$$
Here $\rho^-\equiv\rho^-(t)\in C^\infty_{pc}(M)$ is a partition of unity subordinated to $\mathcal{O}^\prime$, a Cauchy neighbourhood lying in $J^+(\mathcal{O})\setminus\{\mathcal{O}\}$. 
Furthermore $TS^{V,*}=-TS^V|_{\mathcal{S}_{V,sc}(M)}$ and thus it is an isomorphism whose inverse is $-G_V$.
\end{lemma}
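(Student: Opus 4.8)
The plan is to read $TS^{V,*}$ off from the defining property of a formal dual, namely the requirement that $\widetilde{(\psi,TS^V\varphi)}=(TS^{V,*}\psi,\varphi)$ for every $\psi\in\mathcal{S}_{V,sc}(M)$ and every $\varphi\in\mathcal{S}_V(M)$, where on the right one uses the pairing \eqref{pair1} between $\Gamma_0(E)/P_V[\Gamma_0(E)]$ and $\mathcal{S}_V(M)$ as in \eqref{clobs}. Before anything else I would check that the candidate target makes sense: since $\psi$ solves $P_V\psi=0$ and $\rho^-$ is a scalar cut-off commuting with $V$, one has $P_V(\rho^-\psi)=[P,\rho^-]\psi$, a section supported where $d\rho^-\neq 0$, i.e.\ inside the transition region of $\rho^-$ contained in $\mathcal{O}'$. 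Intersecting this timelike-bounded support with the spacelike compact $\supp(\psi)$ yields a compact set, so $P_V(\rho^-\psi)\in\Gamma_0(E)$ and $[P_V\rho^-\psi]\in\Gamma_0(E)/P_V[\Gamma_0(E)]$ is well posed. The same computation with $\eta^+$ and $\varphi$ shows $P_V(\eta^+\varphi)=[P,\eta^+]\varphi\in\Gamma_{tc}(E)$, so both sides of the pairing identity are finite once paired against the spacelike compact $\psi$.

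The core of the argument is a Green's identity computation. Using $P_V\psi=0$ and $P_V\varphi=0$, the two quantities to be compared collapse to the commutator integrals $(\psi,[P,\eta^+]\varphi)$ and $([P,\rho^-]\psi,\varphi)$, localised respectively near $\mathcal{O}$ and near $\mathcal{O}'$. Since $P_V=P+V$ is formally self-adjoint with respect to \eqref{pair1} and normally hyperbolic, I would invoke the associated Wronskian current $J[a,b]$, characterised by $\langle a,P_Vb\rangle_E-\langle P_Va,b\rangle_E=\operatorname{div}J[a,b]$ and antisymmetric, $J[a,b]=-J[b,a]$. The hypothesis $\mathcal{O}'\subset J^+(\mathcal{O})\setminus\mathcal{O}$ allows me to pick a Cauchy surface $\Sigma_-$ entirely in the past of $\mathcal{O}$ and a Cauchy surface $\Sigma_+$ entirely in the future of $\mathcal{O}'$, and to apply Stokes' theorem on the time-slab bounded by them. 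Because the commutator integrands are supported inside the slab, the volume integrals equal the full ones. On $\Sigma_-$ both cut-offs and their derivatives vanish, so the flux there is zero; on $\Sigma_+$ both cut-offs are constant and equal to $1$, so the flux reduces to the symplectic flux of the two solutions through $\Sigma_+$, which is the conserved Wronskian $W[\psi,\varphi]$. Carrying this out for both expressions identifies each with $W[\psi,\varphi]$ up to the convention sign and thus establishes $TS^{V,*}\psi=[P_V\rho^-\psi]$.

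To obtain $TS^{V,*}=-TS^V|_{\mathcal{S}_{V,sc}(M)}$ I would exploit the fact, already used in the Remark following Proposition \ref{Prop_timeslice}, that for a spacelike compact solution the class $[P_V(\cdot)\psi]$ does not depend on the detailed profile of the cut-off but only on its asymptotic values, since the difference of two admissible cut-offs multiplied by $\psi$ is compactly supported. This reduces the comparison of $[P_V\rho^-\psi]$ and $[P_V\eta^+\psi]$ to the relative sign produced by the antisymmetry $J[\psi,\varphi]=-J[\varphi,\psi]$, i.e.\ by the fact that in one expression the cut-off sits on the first slot of $J$ and in the other on the second; this is exactly where the minus sign is generated. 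Once $TS^{V,*}=-TS^V|_{\mathcal{S}_{V,sc}(M)}$ is in place, the statement that $TS^{V,*}$ is an isomorphism with inverse $-G_V$ is immediate from the second part of Proposition \ref{Prop_timeslice}, which gives that $TS^V|_{\mathcal{S}_{V,sc}(M)}$ is an isomorphism onto $\Gamma_0(E)/P_V[\Gamma_0(E)]$ inverted by $G_V$.

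I expect the main obstacle to be the rigorous control of the boundary contributions at temporal infinity: $\psi$ is only spacelike compact and $\eta^+\varphi$ is not compactly supported, so a naive integration by parts is a priori ill-defined and one must genuinely use the cut-offs to confine every flux to a compact time-slab. The associated sign bookkeeping --- reconciling the explicit form $[P_V\rho^-\psi]$ with the asserted identity $TS^{V,*}=-TS^V|_{\mathcal{S}_{V,sc}(M)}$ --- is delicate and is precisely what forces the placement of $\mathcal{O}'$ to the future of $\mathcal{O}$ and the antisymmetry of the Wronskian current to be tracked carefully.
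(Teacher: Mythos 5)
Your strategy is genuinely different from the paper's: the paper proves the duality formula by pure Green-operator algebra (on the support of $P_V(\eta^+\varphi)$ one has $\psi=\rho^-\psi=G^-_VP_V(\rho^-\psi)$, then $(G^-_V)^*=G^+_V$ moves the Green operator to the other slot, and $G^+_VP_V(\eta^+\varphi)=\eta^+\varphi$ finishes), whereas you integrate a Wronskian current over a compact time slab. Your well-definedness paragraph and your final step (deducing the isomorphism and the inverse $-G_V$ from Proposition \ref{Prop_timeslice}) are correct, and the Stokes route can indeed be made to work --- but not with the orientation you assign to $\rho^-$, and this is a genuine error, not a convention issue.

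You take $\rho^-$ to vanish on $\Sigma_-$ and to equal $1$ on $\Sigma_+$, i.e.\ you give it the same asymptotics as $\eta^+$ (admittedly the lemma's literal wording ``partition of unity subordinated to $\mathcal{O}^\prime$'' invites this, but it is incompatible with the rest of the lemma). With that choice your own computation refutes the claimed formula: the two commutator integrals become boundary fluxes $(\psi,P_V(\eta^+\varphi))=\int_{\Sigma_+}J[\psi,\varphi]$ and $(P_V(\rho^-\psi),\varphi)=\int_{\Sigma_+}J[\varphi,\psi]$, and the slot-swap antisymmetry you yourself postulate makes these \emph{opposite}, so you would obtain $TS^{V,*}\psi=-[P_V\rho^-\psi]$; the hedge ``up to the convention sign'' hides exactly this failure. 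Worse, with both cut-offs having the same asymptotics one has $[P_V\rho^-\psi]=[P_V\eta^+\psi]$ on the nose (their difference is $P_V$ of a compactly supported section), so the two assertions $TS^{V,*}\psi=[P_V\rho^-\psi]$ and $TS^{V,*}=-TS^V|_{\mathcal{S}_{V,sc}(M)}$ would contradict each other; correspondingly, your third paragraph, which tries to produce the minus sign in $TS^{V,*}=-TS^V$ from the antisymmetry of $J$, cannot be right, since comparing the classes $[P_V\rho^-\psi]$ and $[P_V\eta^+\psi]$ is pure algebra in the quotient and no current enters. The intended reading --- the one the paper's proof uses (``$\rho^-$ is equal to $1$ on $\mathcal{O}$'', ``$\rho^-\psi$ is future compact'') and the one consistent with the convention $\chi^-\doteq 1-\chi^+$ --- is that $\rho^-$ equals $1$ in the past of $\mathcal{O}^\prime$ and $0$ in the future. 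With that orientation your argument closes correctly: the nonvanishing flux from $(P_V(\rho^-\psi),\varphi)$ sits on $\Sigma_-$, whose outward normal is past-directed, and this orientation sign cancels the antisymmetry sign, so both expressions equal the conserved flux $\int_{\Sigma}J[\psi,\varphi]$ and the formula $TS^{V,*}\psi=[P_V\rho^-\psi]$ follows; the identity $TS^{V,*}=-TS^V|_{\mathcal{S}_{V,sc}(M)}$ is then obtained, as in the paper, from $P_V\rho^-\psi=-P_V(1-\rho^-)\psi$ (valid since $P_V\psi=0$) together with the cut-off independence you correctly invoke --- the Wronskian plays no role in that step.
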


\begin{proof}
Let $\psi\in\mathcal{S}_{V,sc}(M)$ and $\varphi\in\mathcal{S}_V(M)$ be arbitrary. The following chain of identities holds true:
\begin{gather*}
\widetilde{\left(\psi,TS^V(\varphi)\right)}=\left(\psi,P_V\eta^+\varphi\right)=\left(G^-_VP_V\rho^-\psi,P_V\eta^+\varphi\right)=\\
=\left(P_V\rho^-\psi,G^+_VP_V\eta^+\varphi\right)=\left(P_V\rho^-\psi,\eta^+\varphi\right)=\left(P_V\rho^-\psi,\varphi\right),
\end{gather*}
where, in the second identity, we have used that $\rho^-$ is equal to $1$ on $\mathcal{O}$ and that $\rho^-\psi$ is future compact. In the third equality, we have used that the formal adjoint of $G^-_V$ is $G^+_V$, while in the last equality, we have removed $\eta^+$ since it is equal to $1$ on $\mathcal{O}^\prime$. On account of the arbitrariness of both $\psi$ and $\varphi$ and of the non-degenerateness of the pairing, it holds that $TS^{V,*}(\psi)=P_V\rho^-\psi$. Since $P_V\psi=0$,  $P_V\rho^-\psi=-P_V\rho^+\psi$, which, per direct comparison with \eqref{timeslice} entails that $TS^{V,*}=-TS^V|_{\mathcal{S}_{V,sc}(M)}$. Furthermore, on account of this identity, we can translate to $TS^{V,*}$ the properties of the time-slice operator as proven in \cite{Benini:2015bsa}, in particular the independence from the choice of the Cauchy neighbourhood and of the partition of unity subordinated to it.
\end{proof}

\begin{remark}\label{Rem_TS}
Notice that, if we choose a fixed Cauchy neighbourhood $\mathcal{O}^\prime$, the dual time-slice operator can be read also as an isomorphism between $\mathcal{S}_{V,sc}(M)$ and $\frac{\Gamma_0(E|_{\mathcal{O}^\prime})}{P_V[\Gamma_0(E|_{\mathcal{O}^\prime})]}$. In this case we will write $TS^{V,*}_{\mathcal{O}^\prime}$. Notice, moreover, that $TS^{V,*}_{\mathcal{O}^\prime}=-TS^V_{\mathcal{O}^\prime}$ on $\mathcal{S}_{V,sc}(M)$. 
\end{remark}

\begin{lemma}\label{Lemma_dualDD}
Let $R_{0,V^+}:\mathcal{S}(M)\to\mathcal{S}_{V^+}(M)$ be the extended M\o ller operator as per \eqref{DD} referred to the potential $V^+\doteq V\chi^+$. Let $\widetilde{(,)}^\prime:\mathcal{S}_{V^+}(M)\times\frac{\Gamma_0(E)}{P_{V^+}\Gamma_0(E)}\to\mathbb{R}$ be the pairing such that, for all $\varphi\in\mathcal{S}_{V^+}(M)$ and for all $[\alpha]\in\frac{\Gamma_0(E)}{P_{V^+}\Gamma_0(E)}$, 
\begin{equation}
\label{paring-new}
\widetilde{(\varphi,[\alpha])}^ =(\varphi,\alpha)\doteq\int\limits_M d\mu_g \langle\varphi,\alpha\rangle_E,
\end{equation}
where $\langle,\rangle$ is the fiberwise scalar product on $E$. Then the {\bf formal dual} operator of $R_{0,V^+}$ is $R^*_{0,V^+}:\frac{\Gamma_0(E)}{P_{V^+}\Gamma_0(E)}\to\frac{\Gamma_0(E)}{P\Gamma_0(E)}$ such that
\begin{equation}\label{dualDD}
R^*_{0,V^+}[\alpha]\doteq\left[\left(\mathbb{I}-V^+G^-_{V^+}\right)\alpha\right],
\end{equation}
where $\alpha$ is any representative of $[\alpha]$ and $G^-_{V^+}$ is the unique advanced Green operator associated to $P+V^+$. Furthermore $R^*_{0,V^+}$ is an isomorphism of vector spaces whose inverse is $R^{*,-1}_{0,V^+}[\alpha^\prime]\doteq[(\mathbb{I}+V^+G^-)\alpha^\prime]$, where $G^-$ is the advanced Green operator of $P$ and $[\alpha^\prime]\in\frac{\Gamma_0(E)}{P\Gamma_0(E)}$.
\end{lemma}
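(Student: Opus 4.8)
The plan is to verify, directly from the definition \eqref{DD} of $R_{0,V^+}=\mathbb{I}-G^+_{V^+}V^+$, that the operator in \eqref{dualDD} satisfies the defining identity of a formal dual,
\[
\widetilde{(R_{0,V^+}\phi,[\alpha])}^{\,\prime}=(\phi,R^*_{0,V^+}\alpha),
\]
for every $\phi\in\mathcal{S}(M)$ and every representative $\alpha\in\Gamma_0(E)$ of $[\alpha]$, where the right-hand side is the analogue of \eqref{paring-new} between $\mathcal{S}(M)$ and $\Gamma_0(E)/P\Gamma_0(E)$. Expanding the left-hand pairing gives $(\phi,\alpha)-(G^+_{V^+}V^+\phi,\alpha)$, so the whole computation reduces to transporting $G^+_{V^+}V^+$ onto the second entry. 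First I would invoke that $V^+=V\chi^+$ is formally self-adjoint for \eqref{pair1} --- being the product of the self-adjoint $V$ with the real scalar $\chi^+$ --- and that the formal adjoint of the retarded Green operator $G^+_{V^+}$ is the advanced one $G^-_{V^+}$, a consequence of the formal self-adjointness of $P_{V^+}$. These two facts yield $(G^+_{V^+}V^+\phi,\alpha)=(V^+\phi,G^-_{V^+}\alpha)=(\phi,V^+G^-_{V^+}\alpha)$, whence $R^*_{0,V^+}\alpha=(\mathbb{I}-V^+G^-_{V^+})\alpha$, which is \eqref{dualDD}.

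The delicate point --- and the step I would treat with most care --- is that $\phi$ ranges over all of $\mathcal{S}(M)$ and is therefore not compactly supported, so the finiteness of the intermediate pairings and the legitimacy of the adjunction $(G^+_{V^+}\,\cdot\,,\,\cdot\,)=(\,\cdot\,,G^-_{V^+}\,\cdot\,)$ must be established by support bookkeeping rather than taken for granted. The decisive observations are that $\supp(V^+)\subseteq\supp(\chi^+)\subseteq J^+(\Sigma_-)$ for a Cauchy surface $\Sigma_-$ lying to the past of $\mathcal{N}$, since $\chi^+$ depends only on $t$ and vanishes on $J^-(\mathcal{N})\setminus\mathcal{N}$, and that $\supp(G^-_{V^+}\alpha)\subseteq J^-(\supp\alpha)$ with $\supp\alpha$ compact. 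Consequently the integrand $\langle V^+\phi,G^-_{V^+}\alpha\rangle_E$ is supported in $J^+(\Sigma_-)\cap J^-(\supp\alpha)$, which is compact by global hyperbolicity, so all the relevant integrals converge and the adjunction is justified on the precisely overlapping regions.

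Next I would check that \eqref{dualDD} is well posed as a map of quotients. The same support estimate shows that $V^+G^-_{V^+}\alpha$ has compact support, so $(\mathbb{I}-V^+G^-_{V^+})\alpha\in\Gamma_0(E)$ and $[(\mathbb{I}-V^+G^-_{V^+})\alpha]$ is a genuine element of $\Gamma_0(E)/P\Gamma_0(E)$. Independence of the representative then follows from a one-line computation: if $\alpha=P_{V^+}\beta$ with $\beta\in\Gamma_0(E)$, then using $G^-_{V^+}P_{V^+}=\mathrm{id}$ on $\Gamma_0(E)$ together with $P_{V^+}-P=V^+$ one gets $(\mathbb{I}-V^+G^-_{V^+})P_{V^+}\beta=P_{V^+}\beta-V^+\beta=P\beta\in P\Gamma_0(E)$, so null classes are sent to null classes.

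Finally, for the inverse I would run the identical dual computation for $R^{-1}_{0,V^+}=\mathbb{I}+G^+V^+$ from \eqref{DD-1}: the self-adjointness of $V^+$ and the adjunction $(G^+)^*=G^-$ for the free operator $P$ give that the formal dual of $R^{-1}_{0,V^+}$ is $[\alpha^\prime]\mapsto[(\mathbb{I}+V^+G^-)\alpha^\prime]$, with the same argument ensuring that $V^+G^-\alpha^\prime$ is compactly supported. Since passing to formal duals is contravariant and sends identities to identities, $(R^{-1}_{0,V^+})^*=(R^*_{0,V^+})^{-1}$, so this operator is exactly $R^{*,-1}_{0,V^+}$; equivalently one may verify $(\mathbb{I}-V^+G^-_{V^+})(\mathbb{I}+V^+G^-)=\mathbb{I}$ modulo $P\Gamma_0(E)$ by a manipulation mirroring the proof of Theorem \ref{DDtheorem}. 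This exhibits $R^*_{0,V^+}$ as an isomorphism with the stated inverse. Throughout, the only real obstacle is the support bookkeeping that both legitimizes moving the Green operators across the pairing and guarantees compactness of the images; once it is in place, the remaining manipulations are purely algebraic.
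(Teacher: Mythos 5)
Your proposal is correct and follows essentially the same route as the paper: the formal dual is computed by moving $G^+_{V^+}V^+$ across the pairing via the formal self-adjointness of $P_{V^+}$ (equivalently, the adjunction between $G^+_{V^+}$ and $G^-_{V^+}$), with the compactness of $J^+(\Sigma_-)\cap J^-(\supp\alpha)$ doing exactly the support bookkeeping the paper relies on. Your minor variations --- checking well-posedness on the quotient at the operator level via $(\mathbb{I}-V^+G^-_{V^+})P_{V^+}\beta=P\beta$, and obtaining the inverse by dualizing $R^{-1}_{0,V^+}$ and invoking contravariance instead of the paper's direct product computation --- are sound and, if anything, make explicit the compact-support claim for $V^+G^-_{V^+}\alpha$ that the paper leaves implicit.
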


\begin{proof}
Let $u\in\mathcal{S}(M)$ and $[\alpha]\in\frac{\Gamma_0(E)}{P_{V^+}\Gamma_0(E)}$ be arbitrary. It holds
\begin{gather*}
\widetilde{(R_{0,V^+}u,[\alpha])}=(u,\alpha)-(G^+_{V^+}V^+ u,\alpha).
\end{gather*}
The second term on the right hand side can be written as
\begin{eqnarray*}
(G^+_{V^+}V^+ u,\alpha) &=&
\int_M d\mu_g \langle(G^+_{V^+}V^+ u), \alpha\rangle_E\\ &=&
\int_M d\mu_g \langle(G^+_{V^+}V^+ u), P_{V^+}G^-_{V^+}\alpha\rangle_E\\ &=&
\int_M d\mu_g(x) \langle V^+ u, G^-_{V^+}\alpha\rangle_E =
\int_M d\mu_g\langle u, (V^+ G^-_{V^+})\alpha\rangle_E,
\end{eqnarray*}
where in the second equality we have used that $P_{V^+}\circ G^-_{V^+}=id|_{\Gamma_0(E)}$. In the third, we have exploited both that $P_{V^+}$ is formally self-adjoint and that $\supp(V^+ u)\cap\supp(G^-_{V^+}\alpha)$ is compact.  Furthermore, the whole chain of identities does not depend on the choice of representative in $[\alpha]$, since, for all $P_{V^+}h$, $h\in\Gamma_0(E)$,
\begin{gather*}
\int_M d\mu_g \langle R_{0,V^+} u, P_{V^+}h\rangle_E=
\int_M d\mu_g \langle P_{V^+}R_{0,V^+}u,h\rangle_E=
\int_M d\mu_g\langle Pu, h\rangle_E=0,
\end{gather*}
where we used the compactness of $h$ together with the intertwining property proven in Theorem \ref{DDtheorem}. By merging together all identities, we obtain:
\begin{gather*}
\widetilde{(R_{0,V^+}u,[\alpha])}=
\widetilde{(u,[(\mathbb{I}-V^+G^-_{V^+})(\alpha)])}=
\widetilde{(u,R^*_{0,V^+}[\alpha])},
\end{gather*}
from which the sought result descends, being the pairing non-degenerate. To prove that $R^*_{0,V^+}$ is an isomorphism of vector spaces, it suffices to show that $R^{*,-1}_{0,V^+}\doteq\mathbb{I}+V^+G^-$ is indeed the inverse operator. For every $[\alpha^\prime]\in\frac{\Gamma_0(E)}{P\Gamma_0(E)}$ it holds
\begin{eqnarray*}
\left(R^*_{0,V^+}\circ R^{*,-1}_{0,V^+}\right)[\alpha^\prime]&=&
\left[\left(\mathbb{I}-V^+G^-_{V^+}+V^+G^- - V^+G^-_{V^+}V^+G^-\right)\alpha^\prime\right]\\ &=&
\left[\left(\mathbb{I}-V^+G^-_{V^+}+V^+G^- - V^+G^-_{V^+}(P_{V^+}-P)G^-\right)\alpha^\prime\right]=
[\alpha^\prime],
\end{eqnarray*}
where, in the last identity we used both that $P\circ G^- =id|_{\Gamma_{0}(E)}$ and that $G^-_{V^+}\circ P_{V^+}=id|_{\Gamma_{fc}(E)}$.
Analogously one shows that, for every $[\alpha]\in\frac{\Gamma_0(E)}{P_{V^+}\Gamma_0(E)}$, $\left(R^{*,-1}_{0,V^+}\circ R^*_{0,V^+}\right)[\alpha]=[\alpha]$ and this concludes the proof.
\end{proof}

\begin{corollary}\label{Cor_Green}
Let $\widetilde{R}^*_{0,V^+}\doteq\mathbb{I}-V^+G^-_{V^+}:\Gamma_0(E)\to\Gamma_0(E)$. Then it holds that $\widetilde R^*_{0,V^+}\circ P_{V^+}= P$ as well as
\begin{subequations}
\begin{equation}
R_{0,V^+}\circ G^+ = G^+_{V^+}:\Gamma_{pc}(E)\to\Gamma_{pc}(E),
\end{equation}
\begin{equation}\label{12b}
R_{0,V^+}\circ G \circ \widetilde R^*_{0,V^+}= G_{V^+}:\Gamma_0(E)\to\mathcal{S}_{V^+}(M),
\end{equation}
\end{subequations}
where $G^\pm$ and $G^\pm_{V^+}$ are the advanced and retarded Green operators of $P$ and $P_{V^+}$ respectively.
\end{corollary}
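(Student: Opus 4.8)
The plan is to treat the three identities in increasing order of difficulty, using throughout the defining relations of the Green operators from Definition \ref{Green}, the factorisation $V^+=P_{V^+}-P$, and the intertwining property $P_{V^+}\circ R_{0,V^+}=P$ from Theorem \ref{DDtheorem}. First I would dispatch $\widetilde R^*_{0,V^+}\circ P_{V^+}=P$ by a one-line computation: since $\Gamma_0(E)\subset\Gamma_{fc}(E)$ and $G^-_{V^+}\circ P_{V^+}=\mathbb{I}$ on $\Gamma_{fc}(E)$, one has $(\mathbb{I}-V^+G^-_{V^+})P_{V^+}=P_{V^+}-V^+=P$. Next, for the retarded relation $R_{0,V^+}\circ G^+=G^+_{V^+}$, I would expand $R_{0,V^+}G^+=G^+-G^+_{V^+}V^+G^+$ and insert $V^+=P_{V^+}-P$; the term $G^+_{V^+}V^+G^+=G^+_{V^+}P_{V^+}G^+-G^+_{V^+}PG^+$ collapses to $G^+-G^+_{V^+}$ via $G^+_{V^+}P_{V^+}=\mathbb{I}$ and $PG^+=\mathbb{I}$ on $\Gamma_{pc}(E)$, whence the claim. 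This is the standard second-resolvent manipulation for Green operators.

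The substance is \eqref{12b}, and my key preliminary step would be to establish the advanced mirror of the retarded relation just proven, namely
\begin{equation*}
G^-\circ\widetilde R^*_{0,V^+}=G^-_{V^+}\quad\textrm{on }\Gamma_0(E).
\end{equation*}
This follows from the twin resolvent identity $G^--G^-_{V^+}=G^-V^+G^-_{V^+}$, which I would obtain by writing $G^-=G^-P_{V^+}G^-_{V^+}$ and $G^-_{V^+}=G^-PG^-_{V^+}$ (using $P_{V^+}G^-_{V^+}=\mathbb{I}$ and $G^-P=\mathbb{I}$ on $\Gamma_{fc}(E)$) and subtracting; then $G^-\widetilde R^*_{0,V^+}=G^--G^-V^+G^-_{V^+}=G^-_{V^+}$. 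The upshot is a clean duality: $R_{0,V^+}$ converts the retarded $P$-propagator into the retarded $P_{V^+}$-propagator when applied on the left, while $\widetilde R^*_{0,V^+}$ converts the advanced one when applied on the right.

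With both conversion rules in hand, I would prove \eqref{12b} by splitting $G=G^+-G^-$ and acting on arbitrary $f\in\Gamma_0(E)$, setting $g\doteq\widetilde R^*_{0,V^+}f$. The retarded part yields $R_{0,V^+}G^+g=G^+_{V^+}g=G^+_{V^+}f-G^+_{V^+}V^+G^-_{V^+}f$, while for the advanced part I would first use the mirror identity $G^-g=G^-_{V^+}f$ and then $R_{0,V^+}G^-g=G^-g-G^+_{V^+}V^+G^-g=G^-_{V^+}f-G^+_{V^+}V^+G^-_{V^+}f$. Subtracting, the cross terms $G^+_{V^+}V^+G^-_{V^+}f$ cancel and one is left with $G^+_{V^+}f-G^-_{V^+}f=G_{V^+}f$, as desired. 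I expect the genuine obstacle to be neither of the algebraic identities but the support bookkeeping: one must check at each stage that the argument lies in the correct class ($\Gamma_{pc}$, $\Gamma_{fc}$, or $\Gamma_0$) so that every Green operator is applied within its domain — in particular that $V^+G^-_{V^+}f$ is again compactly supported because $\chi^+$ vanishes in the far past while $J^-(\supp f)$ is bounded to the future — and to confirm that the mirror identity and the cancellation of the cross terms respect these constraints.
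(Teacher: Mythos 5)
Your proof is correct, and it is worth noting that it is substantially more self-contained than the paper's. For the retarded identity $R_{0,V^+}\circ G^+=G^+_{V^+}$ the paper argues in one line by inserting the identity $\mathbb{I}=G^+_{V^+}\circ P_{V^+}$ (legitimate on past-compact sections) and invoking the intertwining relation $P_{V^+}\circ R_{0,V^+}=P$ of Theorem \ref{DDtheorem}, i.e.\ $R_{0,V^+}G^+(f)=G^+_{V^+}P_{V^+}R_{0,V^+}G^+(f)=G^+_{V^+}PG^+(f)=G^+_{V^+}(f)$; your second-resolvent expansion $G^+_{V^+}V^+G^+=G^+-G^+_{V^+}$ is an equivalent algebraic manipulation of the same ingredients. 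The real divergence is at \eqref{12b}: the paper does not prove it at all, but defers to \cite[Lemma 3.1]{Dragon1} with the remark that the proof there carries over ``up to minor modifications.'' You instead supply the full argument — the advanced mirror identity $G^-\circ\widetilde R^*_{0,V^+}=G^-_{V^+}$ on $\Gamma_0(E)$, the splitting $G=G^+-G^-$, and the cancellation of the cross terms $G^+_{V^+}V^+G^-_{V^+}f$ — and, importantly, you identify precisely where the ``minor modification'' lies, namely the support bookkeeping: since $\chi^+$ is past compact while $\supp(G^-_{V^+}f)\subseteq J^-(\supp f)$ is future compact, the section $V^+G^-_{V^+}f$ is compactly supported, which is what replaces the compact support of the potential assumed in \cite{Dragon1} and keeps every Green operator inside its domain of definition. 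The paper's route buys brevity by leaning on the external reference; yours buys a complete, verifiable proof adapted to the extended (non-compactly supported) potential, which is arguably what a reader of this corollary actually needs.
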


\begin{proof}
The first statement descends from the following chain of identities: Let $(,)$ be the pairing \eqref{pair1}. Then, in view of the domain of defintion and of the support properties of the advanced and retarded Green operators, for any $f\in\Gamma_{tc}(E)$ it holds
$$R_{0,V^+}G^+(f)=
G^+_{V^+}P_{V^+}R_{0,V^+}G^+(f)=G^+_{V^+}(f),$$
where we used the intertwining property proven in Theorem \ref{DDtheorem}. The arbitrariness of $f$ entails the sought conclusion. The second statement recasts in this framework the results of \cite[Lemma 3.1]{Dragon1}. Since the proof is identical up to minor modifications, we omit it.
\end{proof}

\noindent We have gathered all ingredients to state the main result of this section
\begin{proposition}\label{iso_algebra}
Let $\mathcal{R}^*_{0,V^+}:\frac{\Gamma_0(E)}{P_V[\Gamma_0(E)]}\to\frac{\Gamma_0(E)}{P[\Gamma_0(E)]}$ be such that, for every $[f]\in \frac{\Gamma_0(E)}{P_V[\Gamma_0(E)]}$
\begin{equation}\label{dual-Moller}
\mathcal{R}^*_{0,V^+}[f]=-\left(R_{0,V^+}^*\circ TS_{\mathcal{O}}^{V^+,*}\circ G_V\right)f,
\end{equation}
where $R_{0,V^+}^*$ is defined in Lemma \ref{Lemma_dualDD}, $\mathcal{O}$ is a Cauchy neighbourhood of $M$, lying in $J^+(\mathcal{N})\setminus\mathcal{N}$ and $TS^{V^+,*}_{\mathcal{O}}$ is the dual time-slice operator defined in Lemma \ref{Lemma_dualTS} and Remark \ref{Rem_TS}. Then $\mathcal{R}^*_{0,V^+}$ 
\begin{enumerate}
\item is the dual operator to \eqref{mainiso} on $\frac{\Gamma_0(E)}{P_V[\Gamma_0(E)]}$ with respect to the pairing \eqref{paring-new},
\item realizes a symplectic isomorphism between $(\mathcal{E}^{obs}_V(M),\sigma_V)$ and $(\mathcal{E}^{obs}(M),\sigma)$ via the map assigning to any $F_{[f]}\in\mathcal{E}^{obs}_V(M)$, $F_{[\widetilde f]}\in\mathcal{E}^{obs}(M)$ where $[\widetilde f]=\mathcal{R}^*_{0,V^+}[f]$.
\end{enumerate}
\end{proposition}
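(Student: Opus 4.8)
The plan is to prove the two assertions separately, in both cases exploiting that, by \eqref{mainiso}, the map $\mathcal{R}_{0,V^+}$ is the composition $G_V\circ TS^{V^+}_{\mathcal{O}}\circ R_{0,V^+}$, where the middle stage also silently incorporates the identification $\frac{\Gamma_{tc}(E|_{\mathcal{O}})}{P_{V^+}[\Gamma_{tc}(E|_{\mathcal{O}})]}\simeq\frac{\Gamma_{tc}(E|_{\mathcal{O}})}{P_V[\Gamma_{tc}(E|_{\mathcal{O}})]}$ coming from $\chi^+|_{\mathcal{O}}=1$. For the first statement I would dualize this composition factor by factor, recalling that the dual of a composition reverses the order. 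The dual of $R_{0,V^+}$ is $R^*_{0,V^+}$ by Lemma \ref{Lemma_dualDD}, and the dual of $TS^{V^+}_{\mathcal{O}}$ is $TS^{V^+,*}_{\mathcal{O}}$ by Lemma \ref{Lemma_dualTS} and Remark \ref{Rem_TS}. The remaining factor is the inverse time-slice isomorphism $G_V$; since its inverse is $TS^V_{\mathcal{O}}$ and Lemma \ref{Lemma_dualTS} yields $TS^{V,*}_{\mathcal{O}}=-TS^V_{\mathcal{O}}$ on $\mathcal{S}_{V,sc}(M)$, the dual of $G_V=(TS^V_{\mathcal{O}})^{-1}$ is $(TS^{V,*}_{\mathcal{O}})^{-1}=-G_V$, the right-hand $G_V$ now being read as the causal propagator $\Gamma_0(E)\to\mathcal{S}_{V,sc}(M)$. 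Stacking the three duals in reverse order therefore reproduces $-R^*_{0,V^+}\circ TS^{V^+,*}_{\mathcal{O}}\circ G_V$, i.e. formula \eqref{dual-Moller}, the overall sign being exactly the one carried by the dual of the inverse time-slice map. To make the identity $(f,\mathcal{R}_{0,V^+}u)=(\mathcal{R}^*_{0,V^+}[f],u)$ with respect to \eqref{paring-new} rigorous, I would write it out and peel off one operator at a time, invoking the corresponding lemma at each step.

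The one point demanding care in the first statement is that $TS^{V^+,*}_{\mathcal{O}}$ is a priori defined on $\mathcal{S}_{V^+,sc}(M)$, whereas the output of $G_V$ lies in $\mathcal{S}_{V,sc}(M)$. Here I would use that the cut-off $\rho^-$ entering $TS^{V^+,*}_{\mathcal{O}}(\psi)=[P_{V^+}\rho^-\psi]$ is supported in the region $J^+$ of the past boundary of $\mathcal{O}'$, where $\chi^+=1$ and hence $V^+=V$. Consequently $P_{V^+}\rho^-\psi=P_V\rho^-\psi$ for such $\psi$, so the expression is unambiguous and coincides with the genuine dual; this is the concrete incarnation of the quotient identification $P_{V^+}[\Gamma_{tc}(E|_{\mathcal{O}})]=P_V[\Gamma_{tc}(E|_{\mathcal{O}})]$ used in \eqref{Isomain}.

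For the second statement, since $\mathcal{R}^*_{0,V^+}$ is a composition of the isomorphisms supplied by Lemmas \ref{Lemma_dualTS} and \ref{Lemma_dualDD}, it is itself an isomorphism, and because $\mathcal{E}^{obs}_V(M)$ is isomorphic to its label space the assignment $F_{[f]}\mapsto F_{\mathcal{R}^*_{0,V^+}[f]}$ is a vector-space isomorphism onto $\mathcal{E}^{obs}(M)$. It then remains to check $\sigma(F_{[\widetilde f]},F_{[\widetilde{f'}]})=\sigma_V(F_{[f]},F_{[f']})$ with $[\widetilde f]=\mathcal{R}^*_{0,V^+}[f]$. Writing the left-hand side as $(\widetilde f,G\widetilde{f'})$ via \eqref{sympl} and noting $G\widetilde{f'}\in\mathcal{S}_{sc}(M)\subset\mathcal{S}(M)$, the duality just established turns it into $(f,\mathcal{R}_{0,V^+}(G\widetilde{f'}))$. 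Thus the claim reduces to the identity $\mathcal{R}_{0,V^+}\circ G\circ(\text{representative of }\mathcal{R}^*_{0,V^+})=G_V$, which I would obtain by unfolding $\mathcal{R}^*_{0,V^+}$ and $\mathcal{R}_{0,V^+}$ and invoking the intertwining of causal propagators $R_{0,V^+}\circ G\circ\widetilde R^*_{0,V^+}=G_{V^+}$ from Corollary \ref{Cor_Green}, equation \eqref{12b}, together with the identifications between $G_{V^+}$ and $G_V$ built into \eqref{Isomain}.

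I expect the main obstacle to be precisely this last reduction: tracking the various support conditions (past-, future-, space-like- and timelike-compactness) through the composition so that each Green-operator identity $P^\pm G^\pm=\mathrm{id}$ and $G^\pm P^\pm=\mathrm{id}$ is applied only on the domain where it holds, and choosing the representative of $\mathcal{R}^*_{0,V^+}[f']$ so that \eqref{12b} applies verbatim. The sign bookkeeping inherited from $TS^{V,*}=-TS^V$ must likewise be carried consistently through both statements.
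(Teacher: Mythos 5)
Your proposal is correct and follows essentially the same route as the paper: part 1 is obtained by composing the duals supplied by Lemma \ref{Lemma_dualTS} and Lemma \ref{Lemma_dualDD} (with the sign coming from $TS^{V,*}=-TS^V$), and the symplectic invariance in part 2 is reduced, exactly as in the paper's chain of identities, to \eqref{12b} of Corollary \ref{Cor_Green} together with the time-slice inverses $TS\circ G=\mathrm{id}$ and the quotient identification on $\mathcal{O}$ where $\chi^+=1$. Your explicit treatment of the domain mismatch of $TS^{V^+,*}_{\mathcal{O}}\circ G_V$ and of the sign bookkeeping is, if anything, more careful than the paper's own (which silently uses $TS^{V^+,*}_{\mathcal{O}}=-TS^{V}_{\mathcal{O}'}$ and is not fully consistent about the sign in \eqref{dual-Moller}).
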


\begin{proof}
Let us start from {\em 1.} The statement descends directly from the definition of $\mathcal{R}_{0,V^+}$ and of the pairing \eqref{paring-new}, by plugging in the results of Lemma \ref{Lemma_dualTS} and of Lemma \ref{Lemma_dualDD}.

Let us focus on {\em 2.} Notice that, for every $[f]\in\frac{\Gamma_0(E)}{P_V[\Gamma_0(E)]}$ and for every $u\in\mathcal{S}(M)$
$$F_{[f]}(\mathcal{R}_{0,V^+}(u))=(\mathcal{R}_{0,V^+}(u),f)=\left(u,(R_{0,V^+}^*\circ TS_{\mathcal{O}}^{V^+,*}\circ G_V)(f)\right)=F_{[\widetilde{f}]}(u).$$ 
To show that we have identified an isomorphism between $\mathcal{E}^{obs}_V(M)$ and $\mathcal{E}^{obs}(M)$, we recall that in Lemma \ref{Lemma_dualDD} and in Lemma \ref{Lemma_dualTS}, we have already shown that each of the operators used to define $\mathcal{R}^*_{0,V^+}$ admits an inverse. Putting them together it turns out that $\mathcal{R}^{*,-1}_{0,V^+}\doteq TS^V _{\mathcal{O}}\circ G_{V^+}\circ R^{*-1}_{0,V^+}$ is both a left and a right inverse of $\mathcal{R}^*_{0,V^+}$. Notice that it is necessary that $\mathcal{O}$ is chosen to lie in $J^+(\mathcal{N})\setminus\mathcal{N}$. 

To conclude we need to prove that the symplectic forms are preserved. Let $[f],[f^\prime]\in\frac{\Gamma_0(E)}{P[\Gamma_0(E)]}$ and choose two representatives $f,f^\prime\in\Gamma_0(\mathcal{O})$. Then it holds:
\begin{gather*}
\sigma(\mathcal{R}^*_{0,V^+}[f],\mathcal{R}^*_{0,V^+}[f^\prime])=\int\limits_M d\mu_g\langle\left(R_{0,V^+}^*\circ TS_{\mathcal{O}}^{V^+,*}\circ G_V\right)f\,,\, G\left(R_{0,V^+}^*\circ TS_{\mathcal{O}}^{V^+,*}\circ G_V\right)f^\prime\rangle_E=\\
\int\limits_M d\mu_g \langle G_V(f)\,,\,\left( TS_{\mathcal{O}}^{V^+}\circ R_{0,V^+}\circ G\circ R_{0,V^+}^*\circ TS_{\mathcal{O}}^{V^+,*}\circ G_V\right)f^\prime\rangle_E=\\
-\int\limits_M d\mu_g\langle G_V(f)\,,\,\left( TS_{\mathcal{O}}^{V^+}\circ G_{V^+}\right)f^\prime\rangle_E=-\int\limits_M d\mu_g\langle G_V(f)\,,\, f^\prime\rangle_E = \sigma_V([f],[f^\prime])
\end{gather*}
In the third line we used both that $TS_{\mathcal{O}^\prime}^{V}\circ G_V$ is the identity and that, in view of Lemma \ref{Lemma_dualTS} and of Remark \ref{Rem_TS} $TS_{\mathcal{O}}^{V^+,*}=-TS_{\mathcal{O}^\prime}^{V}$. Furthermore we have also used \eqref{12b} as well as that $TS_{\mathcal{O}}^{V^+}\circ G_{V^+}=id|_{\frac{\Gamma_0(E|_{\mathcal{O}})}{P_{V^+}[\Gamma_0(E|_{\mathcal{O}})]}}$.
\end{proof}

Proposition \ref{iso_algebra} establishes via the extended M\o ller operator an isomorphism between the building blocks of the algebra of observables with and without the formally self-adjoint potential $V$, realized by the operator $\mathcal{R}^*_{0,V^+}$. As a by-product such result extends first of all to an isomorphism between $\mathcal{E}^{obs}(M;\mathbb{C})^{\otimes n}$ and $\mathcal{E}^{obs}_V(M;\mathbb{C})^{\otimes n}$ for all $n\geq 0$. Secondly, since $\mathcal{R}_{0,V^+}$ is a symplectomorphism, it turns out that also the ideal encoding the canonical commutation relations $\mathcal{I}_V(M;\mathbb{C})$ is isomorphic to $\mathcal{I}(M;\mathbb{C})$. Indicating the extension of $\mathcal{R}_{0,V^+}$ to $\mathcal{T}^{obs}(M)$ as $\mathcal{R}^{alg}_{0,V^+}$ we have identified ultimately the following isomorphism 
\begin{equation}\label{iso_alg}
\mathcal{R}^{alg,*}_{0,V^+}:\mathcal{A}^{obs}_V(M)\to\mathcal{A}^{obs}(M),
\end{equation}
where the action is unambiguously defined on the generators as in Proposition \ref{iso_algebra} and, for notational simplicity, we kept the symbol $\mathcal{R}^{alg,*}_{0,V^+}$ also for the map between the quotient algebras.

\section{The deformation argument}

In this section we are interested in studying algebraic states and their interplay with the extended M\o ller operator. Recall that, given any unital $*$-algebra $\mathcal{A}$, a (algebraic) state is a linear functional $\omega:\mathcal{A}\to\mathbb{C}$ such that

$$\omega(\mathbb{I})=1,\qquad \omega(a^*a)\geq 0\quad\forall a\in\mathcal{A},$$

where $\mathbb{I}$ is the unit element in $\mathcal{A}$. Via the celebrated GNS theorem, one recovers the probabilistic interpretation proper of quantum theories, namely to the pair $(\mathcal{A},\omega)$ it is associated a unique (up to unitary transformation) triple $(\mathcal{D}_\omega,\pi_\omega,\Omega_\omega)$ where $\mathcal{D}_\omega$ is a dense subspace of an Hilbert space $\mathcal{H}_\omega$ and $\pi_\omega\in Hom(\mathcal{A},\mathcal{L}(\mathcal{D}_\omega))$ is compatible with the $*$-operation. Furthermore $\Omega_\omega\in\mathcal{D}_\omega$ is a unit-norm cyclic vector such that $\overline{\pi_\omega(\mathcal{D}_\omega)\Omega_\omega}=\mathcal{H}_\omega$. 

Let us now restrict our attention to the algebra of fields $\mathcal{A}^{obs}_V(M)$ of a free field theory whose dynamics obeys \eqref{gen_dyn}. In this case, between the plethora of algebraic states, only a few of them can be considered as physically acceptable, this class going under the name of {\em Hadamard states}. The criterion which singles them out is well-known in the literature and we will not dwell into the details, leaving an interested reader to the recent review \cite{Khavkine:2014mta} and references therein. For the sake of this paper we will focus our attention on the so-called quasi-free/Gaussian states, that is $\omega:\mathcal{A}^{obs}_V(M)\to\mathbb{C}$ is completely reconstructed from its two-point function 
$$\omega_2:\Gamma_0(E)\otimes\Gamma_0(E)\to\mathbb{C},$$
subject to the constraints of being {\em a)} continuous with respect to the test-section topology of $\Gamma_0(E)$, \cite[Def. 1.1.1]{BGP}, {\em b)}
 positive, {\it i.e.} $\omega_2(\overline{f}\otimes f)\geq 0$ for all $f\in\Gamma_0(E)$, {\em c)} compatible with the equations of motion, that is $(P_V\otimes\mathbb{I})\omega_2=(\mathbb{I}\otimes P_V)\omega_2=0$ and {\em d)} compatible with the canonical commutation relations, namely $\omega_2(f\otimes f^\prime)-\omega_2(f^\prime\otimes f)=i\sigma_V(f,f^\prime)\mathbb{I}$, for all $f,f^\prime\in\Gamma_0(E)$, $\sigma_V$ being the symplectic form \eqref{sympl} and $\mathbb{I}$ being the identity element of the algebra. Notice that the two-point function identifies a bi-distribution which we indicate still as $\omega_2\in\mathcal{D}^\prime(E\times E)$. Here $V\in\Gamma(E)$ is arbitrary, possibly even vanishing.

Additionally we shall require $\omega_2$ to fulfill the so-called {\em Hadamard condition} which is a constraint on the singular structure of the two-point function expressed in terms of the associated wavefront set, see {\it e.g.}, \cite{Khavkine:2014mta}:
\begin{equation}\label{WF}
WF(\omega_2)=\{(x,k_x,y,k_y)\in T^*(M\times M)\setminus\{0\}\,|\,(x,k_x)\sim (y,-k_y)\,\textrm{and}\,k_x\triangleright 0\},
\end{equation}
where $\sim$ means that $x$ and $y$ are connected by a lightlike geodesic whose co-tangent vector at $x$ is $k_x$, while $-k_y$ is the parallel transport to $y$ of $k_x$ along such geodesic.
The symbol $\triangleright$ entails that $k_x$ is future directed. Notice that, for vector valued distributions, the wavefront set is defined as the union of that of all components \cite{sv}. From a physical point of view Hadamard states are of interest since on the one hand they reproduce the ultraviolet behaviour of the Poincar\'e vacuum and they guarantee that all quantum fluctuations of observables are finite. On the other hand they are the key ingredient in the definition of Wick polynomials, the building block for dealing with interactions in a perturbative scheme. Notice that, in this section, we could avoid the requirement of $\omega$ being quasi-free, checking instead the wavefront set of the truncated two-point function, hence exploiting the results of \cite{Sanders:2009sw}.

We will not be interested in this aspect, rather on the existence of Hadamard states. It is an old milestone that, for free field theories on a globally hyperbolic spacetime, one can guarantee via a {\em spacetime deformation argument} that a Hadamard state does exist \cite{fnw}, barring those exceptions mentioned in the introduction. The intrinsic drawback of the deformation procedure is that one loses any control on the invariance of the resulting state under the action of the background isometries. Ultimately this is an undesirable feature since, although not necessary from a technical point of view nor often explicitly spelled out, such invariance is certainly sought from a physical point of view. Unless impossible to construct ({\it e.g} a state invariant under time translation for a massless, minimally coupled, scalar field on a static, globally hyperbolic spacetime with compact Cauchy surfaces), in concrete models all computations are done with respect to a state invariant under the action of all isometries. For this reason, the first step is always to look for one of such Hadamard states.
In the past years several methods have been devised to identify one or more of these states, see for example \cite{Dappiaggi:2010gt,Dappiaggi:2007mx,Dappiaggi:2008dk,Olbermann:2007gn,Sanders:2013vza,Them:2013uka} and also \cite{Gerard:2016sgm,Vasy:2015jzp}.
However, unless one considers highly symmetric backgrounds, such as those homogeneous and isotropic, when a mode expansion is a viable tool, a construction scheme has been found only for specific field theories with a given value of the mass and of the coupling of scalar curvature.

The reason lies often in the possibility to exploit additional, special properties of the underlying equation of motion, such as, for example, a good behaviour of the space of dynamical configurations and of observables under a conformal rescaling of the metric -- see for example the case of a massless, conformally coupled scalar field on an asymptotically flat spacetime \cite{Dappiaggi:2005ci}.

We will show that by means of the extended M\o ller operator and of the results of the previous sections, one can also introduce a \textbf{\textit{ deformation argument in parameter space}}. In a few heuristic words, such argument guarantees that, if we can construct a Hadamard state, invariant under all background isometries, for a free field theory with a given value of the mass and of the coupling to scalar curvature, then one can induce a counterpart state for any value of these quantities. Furthermore such state fulfils the Hadamard condition and it is invariant under all background isometries, whose associated Killing field is at each point tangent to one of the Cauchy surfaces $\{t\}\times\Sigma$ of the underlying globally hyperbolic spacetime -- see Proposition \ref{BS}. One should be aware that our procedure deals with a more general scenario, since we will be able to induce Hadamard states from one free scalar field theory to another, provided that their dynamics differs only by a smooth potential. For practical purposes the most interesting cases are those in which such potential is of the form $m^2+\xi R$, $m$ being the mass and $R$ the scalar curvature, while $\xi\in\mathbb{R}$.

\begin{remark}\label{Remark: on induced bundles}
Whenever the manifold $M$ of interest is equipped with a complete Killing vector field $X$, we denote with $\alpha_X(\lambda):M\to M$ the action of the one-parameter subgroup of the isometry group, built from $X$ via the exponential map. Here $\lambda$ ranges over the whole $\mathbb{R}$ on account of the hypothesis of completeness of $X$.
The one-parameter subgroup acts on $C^\infty(M,W)$ by defining $(\alpha^\star_X(\lambda)f)(x)\doteq f(\alpha_X^{-1}(\lambda)x)$ for all $f\in C^\infty(M,W)$.
Similarly, we may lift the action of $\alpha_X(\lambda)$ on $\Gamma(E)$ by exploiting the choice of the $\langle,\rangle_E$-connection on $E$ made at the beginning of Section \ref{Section: The extended Moller operator - Classical configurations}.
Indeed, any curve $\alpha_X(\lambda)$ admits a unique parallel lift $\widehat{\alpha}_X(\lambda)$ on $E$ \cite[Chap. 2]{KN63}: we can then define the one-parameter subgroup $\widetilde{\alpha}_X(\lambda)\colon\Gamma(E)\to\Gamma(E)$ as $(\widetilde{\alpha}_X(\lambda)s)(x):=\widehat{\alpha}_X(\lambda)s(\alpha^{-1}_X(\lambda)(x))$, which is well defined section due to the lift relation $\pi\circ\widehat{\alpha}_X(\lambda)=\alpha_X(\lambda)\circ\pi$.
\end{remark}

For definiteness we consider smooth sections of a vector bundle, whose dynamics is ruled by a normally hyperbolic operator $P_V=P+V$ as described in Section \ref{Section: The extended Moller operator - Classical configurations}.
We start from the assumption that we have built a quasifree Hadamard state invariant under all background isometries when the dynamics is ruled by \eqref{gen_dyn} with $V=0$. Implicitly we are also assuming that the operator $P$ intertwines with the action of any isometry lifted to the smooth sections of the underlying vector bundle. The canonical example, that we have in mind, is the Klein-Gordon operator. Our main result, whose proof is inspired by the deformation argument of \cite{Fulling:1989nb}, is the following:

\begin{theorem}\label{main}
Let $(M,g)$ be a globally hyperbolic spacetime, isometric to $\bR\times\Sigma$ and $t:\bR\times\Sigma\to\bR$ a fixed time coordinate, compatible with Proposition \ref{BS}. Let $\xi$ be any complete Killing vector field of $\bR\times\Sigma$ such that $\mathcal{L}_\xi(\chi^+)=0$, where $\chi^+$ identifies a partition of unity subordinated to the Cauchy neighbourhood $\mathcal{N}$, as in Definition \ref{Cauchy}.  Let $\phi\in\Gamma(E)$ be such that $P\phi=0$. Let $\mathcal{A}^{obs}(M)$ be the associated algebra of observables \eqref{algebraobs} together with $\omega:\mathcal{A}^{obs}(M)\to\mathbb{C}$, a quasi-free Hadamard state. If $P_V=P+V$ is such that $V$ is a smooth potential and if $\mathcal{R}^{alg,*}_{0,V^+}:\mathcal{A}^{obs}_V(M)\to\mathcal{A}^{obs}(M)$ is the isomorphism \eqref{iso_alg}, then 
\begin{gather*}
\omega_V\doteq \omega\circ\mathcal{R}^{alg,*}_{0,V^+}:\mathcal{A}^{obs}_V(M)\to\mathbb{C},\\
a \mapsto \omega_V(a)\doteq\omega(\mathcal{R}^{alg,*}_{0,V^+}(a))
\end{gather*}
has the following properties:
\begin{enumerate}
\item It is a quasi-free Hadamard state,
\item Leaving implicit the isometry between $M$ and $\bR\times\Sigma$, it is invariant under the action of any $\xi$ as per hypothesis, provided that $\omega$ enjoys the same property and that $\mathcal{L}_\xi V=0$.
\end{enumerate}
\end{theorem}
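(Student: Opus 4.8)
The plan is to establish the two assertions by different means: item~1 through a microlocal \emph{deformation in parameter space} in the spirit of the spacetime deformation of \cite{Fulling:1989nb}, and item~2 through an equivariance property of the M\o ller map. For item~1 I would first note that $\omega_V$ is automatically a quasi-free state: $\mathcal{R}^{alg,*}_{0,V^+}$ of \eqref{iso_alg} is a unital $*$-isomorphism, so $\omega_V(\mathbb{I})=\omega(\mathbb{I})=1$ and $\omega_V(a^*a)=\omega(\mathcal{R}^{alg,*}_{0,V^+}(a)^*\mathcal{R}^{alg,*}_{0,V^+}(a))\geq 0$; and since the isomorphism is the tensorial extension of the single-particle symplectomorphism $\mathcal{R}^*_{0,V^+}$ of Proposition~\ref{iso_algebra}, the $n$-point functions of $\omega_V$ factorise through $\omega_{V,2}([f]\otimes[f'])=\omega_2(\mathcal{R}^*_{0,V^+}[f]\otimes\mathcal{R}^*_{0,V^+}[f'])$ exactly as those of $\omega$. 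It then remains to verify the wavefront set prescription \eqref{WF} for the bi-distribution $\omega_{V,2}$.

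For the Hadamard property I would exploit that the composition \eqref{dual-Moller} is precisely the two-stage transport of Proposition~\ref{alliso}, and run the argument in two steps. First, pulling $\omega$ back through the classical M\o ller operator $R_{0,V^+}=\mathbb{I}-G^+_{V^+}V^+$ produces a bi-solution $\omega^+_2$ of the normally hyperbolic operator $P_{V^+}$. Because the advanced Green operator entering the dual map $R^*_{0,V^+}=\mathbb{I}-V^+G^-_{V^+}$ of Lemma~\ref{Lemma_dualDD} has past support and $\supp V^+\subseteq\supp\chi^+$, this dual reduces to the identity on sections supported to the past of $\supp\chi^+$; consequently, on a neighbourhood of a Cauchy surface $\{t=t'\}$ chosen with $\chi^+\equiv 0$ there, one has $\omega^+_2=\omega_2$, which carries the Hadamard form \eqref{WF}. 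Since $P_{V^+}$ and $P$ share the same principal symbol, hence the same characteristic set and the same future-oriented bicharacteristic flow, H\"ormander's theorem on the propagation of singularities propagates \eqref{WF} from this Cauchy surface to all of $M$, so $\omega^+_2$ is Hadamard. In the second step the time-slice factor $TS^{V^+,*}_{\mathcal{O}}$ transports this state to the $P_V$-theory: on the Cauchy neighbourhood $\mathcal{O}\subset J^+(\mathcal{N})\setminus\mathcal{N}$ one has $\chi^+\equiv 1$, so $V^+=V$ and $P_{V^+}=P_V$ there, whence $\omega_{V,2}$ coincides with the already-Hadamard $\omega^+_2$ on a neighbourhood of a Cauchy surface lying in $\mathcal{O}$; a second application of propagation of singularities, now for $P_V$, yields \eqref{WF} everywhere, and together with the commutation relations satisfied by $\omega_V$ upgrades the inclusion to the required equality. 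The main obstacle is exactly this microlocal bookkeeping: one must verify that the retarded and advanced supports confine the M\o ller modification strictly away from the past Cauchy surface on which $\omega_2$ is left untouched, and that only the positive-frequency half of the characteristic set is populated, so that Radzikowski's characterisation applies.

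For item~2 I would prove that $\mathcal{R}^{alg,*}_{0,V^+}$ intertwines the lifted isometry action $\widetilde{\alpha}_\xi(\lambda)$ on the two algebras; invariance of $\omega_V$ is then immediate from $\omega_V(\widetilde{\alpha}_\xi(\lambda)a)=\omega(\widetilde{\alpha}_\xi(\lambda)\mathcal{R}^{alg,*}_{0,V^+}(a))=\omega(\mathcal{R}^{alg,*}_{0,V^+}(a))=\omega_V(a)$, the central equality using the assumed invariance of $\omega$. As the algebraic map is the extension of $\mathcal{R}^*_{0,V^+}=-R^*_{0,V^+}\circ TS^{V^+,*}_{\mathcal{O}}\circ G_V$, it suffices to intertwine each factor. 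The hypotheses $\mathcal{L}_\xi V=0$ and $\mathcal{L}_\xi\chi^+=0$ give $\mathcal{L}_\xi V^+=0$, so multiplication by $V^+$ commutes with $\widetilde{\alpha}_\xi(\lambda)$ and, together with the assumed intertwining of $P$, so do $P_V$ and $P_{V^+}$. Since $\xi$ is a complete Killing field, $\widetilde{\alpha}_\xi(\lambda)$ is an isometry and preserves the causal sets $J^\pm$; by the uniqueness of the Green operators (Definition~\ref{Green}) this forces $\widetilde{\alpha}_\xi(\lambda)G^\pm_{V}\widetilde{\alpha}_\xi(\lambda)^{-1}=G^\pm_V$ and likewise for $G^\pm_{V^+}$, $G_V$ and $G_{V^+}$. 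Hence $G_V$ and $R^*_{0,V^+}=\mathbb{I}-V^+G^-_{V^+}$ are equivariant, while for the time-slice factor the equivariance holds at the level of equivalence classes: $\widetilde{\alpha}_\xi(\lambda)$ maps any admissible cut-off to another admissible cut-off subordinated to a possibly displaced Cauchy neighbourhood, and by the independence of $TS^{V^+,*}$ from these choices established in Lemma~\ref{Lemma_dualTS} the induced map on classes is unchanged. Assembling the three intertwinings gives the equivariance of $\mathcal{R}^{alg,*}_{0,V^+}$ and hence the invariance of $\omega_V$.
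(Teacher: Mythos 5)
Your proposal is correct and follows essentially the same route as the paper: quasi-freeness from the symplectomorphism property of $\mathcal{R}^{alg,*}_{0,V^+}$, the Hadamard property via the observation that the pulled-back two-point function is a bisolution of the interpolating operator $P_{V^+}$ agreeing with $\omega_2$ in the past of $\mathcal{N}$ (where $\chi^+$ vanishes) and with $\omega_{V,2}$ on $\mathcal{O}$, so that propagation of singularities (the paper cites \cite[Proposition 16]{Khavkine:2014mta} for exactly this local-to-global step you spell out) yields \eqref{WF} everywhere, and invariance via equivariance of each factor of $\mathcal{R}^*_{0,V^+}$ under $\widetilde{\alpha}_\xi(\lambda)$. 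The only cosmetic difference is that you obtain the commutation $\widetilde{\alpha}_\xi(\lambda)G^\pm_{V^+}\widetilde{\alpha}_\xi(\lambda)^{-1}=G^\pm_{V^+}$ from uniqueness of Green operators (the argument the paper reserves for Proposition \ref{limit_properties}), whereas in this theorem's proof the paper verifies it by a chain of duality identities through the pairing.
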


\begin{proof}
Let $\omega_V$ be as per hypothesis. Since it is defined composing $\omega$ with $\mathcal{R}^{alg, *}_{0,V^+}$, which is in turn built out of a symplectic isomorphism between $\mathcal{E}^{obs}(M)$ and $\mathcal{E}^{obs}_V(M)$ intertwining the equations of motion, it inherits the property of being a quasi-free state. In order to check whether $\omega_V$ is Hadamard let us focus on its two-point function, namely, $\forall F_{[f]},F_{[f^\prime]}\in\mathcal{E}^{obs}_V(M)$,
$$\omega_V(F_{[f]}\otimes F_{[f^\prime]})=\omega(F_{\mathcal{R}^*_{0,V^+}[f]},F_{\mathcal{R}^*_{0,V^+}[f^\prime]})=\omega_2(R_{0,V^+}^*\circ TS_{\mathcal{O}}^{V^+,*}\circ G_V(f),R_{0,V^+}^*\circ TS_{\mathcal{O}}^{V^+,*}\circ G_V(f^\prime)),$$
where $\omega_2\in\mathcal{D}^\prime(E\times E)$ is the bi-distribution associated to $\omega$. Notice that, if we consider the restriction of $\omega_2$ to $\mathcal{O}$ the two-point function thereon is $\omega_{2,V^+}\doteq\omega_2(\widetilde R_{0,V^+}^*\bullet,\widetilde R_{0,V^+}^*\bullet)$ where $\widetilde R_{0,V^+}^*\doteq \mathbb{I}-V^+G^-_{V^+}$. The latter can also be read as the restriction to $\mathcal{O}$ of the two-point function of a quasi-free state for the sections of a vector bundle with dynamics built out of the time-dependant potential $V^+(t)=V\chi^+(t)$, $\chi^+$ being the partition of unity subordinated to $\mathcal{N}$. Notice that, since $\omega$ is of Hadamard form per hypothesis, we can apply \cite[Proposition 16]{Khavkine:2014mta} to conclude that also $\omega_{2,V^+}$ has such form since it coincides with $\omega_2$ in the past of $\mathcal{N}$ where $\chi^+$ vanishes. In turn since the two-point function of $\omega_V$ coincides with $\omega_{2,V^+}$ on $\mathcal{O}$, $\omega_V$ is a Hadamard state.

Let us focus on {\em 2.} and let us call as $\alpha_\xi(\lambda):M\to M$ the action of the one-parameter subgroup of the isometry group, built from $\xi$ via the exponential map. Notice that $\lambda$ ranges over the whole $\mathbb{R}$ on account of the hypothesis of completeness of $\xi$. In the sense of Remark \ref{Remark: on induced bundles} $\widetilde{\alpha}_X(\lambda)$ induces an action on $\Gamma(E)$. Assume that $\omega$ is a quasi-free Hadamard state for $\mathcal{A}^{obs}(M)$ invariant under the action of all isometries. Furthermore, in view of the hypothesis $\mathcal{L}_\xi(\chi^+)=0$, we notice that $R^*_{0,V^+}\widetilde\alpha_\xi(\lambda)=\widetilde\alpha_\xi(\lambda)R^*_{0,V^+}$, for all $\lambda\in\mathbb{R}$, the only non-trivial identity being $G^-_{V^+}\circ\widetilde\alpha_\xi(\lambda)=\widetilde\alpha_\xi(\lambda)\circ G^-_{V^+}$ on $\Gamma_0(E)$. The latter descends from the following chain of identities: For all $f\in\Gamma_0(E)$ and $\alpha\in\Gamma_{tc}(E)$
\begin{gather*}
((G^-_{V^+}\circ\widetilde\alpha_\xi(\lambda))f,\alpha)=(f,(\widetilde\alpha_\xi(-\lambda)\circ G^+_{V^+})\alpha)=(P_{V^+}G^-_{V^+}f,(\widetilde\alpha_\xi(-\lambda)\circ G^+_{V^+})\alpha)\\
=(G^-_{V^+}f,(P_{V^+}\circ\widetilde\alpha_\xi(-\lambda)\circ G^+_{V^+})\alpha)=(G^-_{V^+}f,(\widetilde\alpha_\xi(-\lambda)\circ P_{V^+}\circ G^+_{V^+})\alpha)=\\
=(G^-_{V^+}f,\widetilde\alpha_\xi(-\lambda)\alpha)=(\widetilde\alpha_\xi(\lambda)G^-_{V^+}f,\alpha),
\end{gather*}
where $(,)$ is the non-degenerate pairing between $\Gamma_0(E)$ and $\Gamma_{tc}(E)$. Notice that we have implicitly used the following property: Since $\xi$ is a Killing field, then $\supp(\widetilde{\alpha}_\xi(\lambda)\phi)\subseteq\alpha_\xi(\lambda)(\supp(\phi))$ for all $\phi\in\Gamma(E)$. The same result applies if we restrict our attention to sections lying in $\Gamma_0(E)$ or in $\Gamma_{pc/fc}(E)$. We used that the pairing is the one induced from the metric, that the inverse of $\widetilde\alpha_\xi(\lambda)$ is $\widetilde\alpha_\xi(-\lambda)$ and that $P_{V^+}$ is built out of the metric and of $\chi^+(t)$. To summarize, for every $[f],[f^\prime]\in\frac{\Gamma_0(E)}{P_{V}(\Gamma_0(E))}$, once chosen two representatives $f,f^\prime\in\Gamma_0(E|_{\mathcal{O}})$ where $\mathcal{O}$ is a Cauchy neighbourhood such that $\chi^+|_{\mathcal{O}}=1$, it holds that 
\begin{gather*}
\omega_{2,V}(\widetilde\alpha_\xi(\lambda)(f),\widetilde\alpha_\xi(\lambda)(f^\prime))=\omega(\widetilde R_{0,V^+}^*(\widetilde\alpha_\xi(\lambda)(f)),\widetilde R_{0,V^+}^*(\widetilde\alpha_\xi(\lambda)(f^\prime)))=\\
=\omega((\widetilde\alpha_\xi(\lambda)\circ \widetilde R_{0,V^+}^*(f)),(\widetilde\alpha_\xi(\lambda)\circ \widetilde R_{0,V^+}^*(f^\prime))=\omega(\widetilde R_{0,V^+}^*(f), \widetilde R_{0,V^+}^*(f^\prime))=\omega_{2,V}(f,f^\prime).
\end{gather*}
This implies invariance of $\omega_V$ under the action of the isometry built out of $\xi\in\Gamma(TM)$.
\end{proof}

The main drawback of the previous theorem is the lack of any control on the action on $\omega_{2,V}$ of any isometry whose associated Killing field $\xi^\prime$ is such that $\mathcal{L}_{\xi^\prime}\chi^+\neq 0$. While in many interesting scenarios, such as cosmological or Bianchi spacetimes, there are none of such symmetries, from a structural point of view, this is certainly an unwanted feature. As a concrete example consider any static, globally hyperbolic, four-dimensional spacetime with a non-compact Cauchy surface and consider the ground or a KMS state $\omega$ on $\mathcal{A}^{obs}(M)$ for a massless, minimally coupled scalar field, built out of the timelike Killing field. It enjoys the Hadamard property as proven in \cite{Sahlmann:2000fh}. Yet, the corresponding state on $\mathcal{A}^{obs}_V(M)$ built as per Theorem \ref{main}, while still Hadamard fails to be either a ground or a KMS state. While one might argue that, in these scenarios, one can proceed to a direct construction of Hadamard states without resorting to Theorem \ref{main}, the question, whether the dependence on the cut-off can be removed under suitable conditions, is a worthy one. 

A priori we expect that there is no positive answer in all possible scenarios, since it is known that certain free field theories, {\it e.g.} the massless and minimally coupled scalar field on four-dimensional de Sitter spacetime, do not possess a ground state even though their massive counterpart does. 

We seek for a way of implementing a sort of adiabatic limit with respect to the cut-off function $\chi^+$. A direct approach is doomed to failure since, if we let the cut-off function tend to the constant function $1$, the dual to the improved M\o ller operator $R^*_{0,V^+}$ converges to $R^*_{1,V}\doteq\mathbb{I}-VG^-_V$ in the topology of $\Gamma(E)$. In other words the image of a compactly supported, smooth section is no longer compact and, hence, there is no guarantee that $\omega_2(R^*_{1,V}\cdot,R^*_{1,V}\cdot)$ is a well-defined bi-distribution. Therefore we follow a strategy very similar in spirit to the one used in \cite{Fredenhagen:2013cna} to cope with an almost identical problem in defining thermal states for interacting field theories at a perturbative level. 


More precisely let $\omega$ be a quasi-free Hadamard state for $\mathcal{A}^{obs}(M)$ and let $\omega_V$ be the counterpart for $\mathcal{A}^{obs}_V(M)$ built as per Theorem \ref{main}. 
Let $\omega_{2,V}\in\mathcal{D}^\prime(E\times E)$ be its two-point function. For any $\lambda>0$ let $V^+_\lambda\doteq V\chi^+_\lambda$, where $\chi^+_\lambda(t)\doteq\chi^+(\frac t\lambda)$. For later convenience and without loss of generality $\chi^+$ is chosen as in Definition \ref{Cauchy} with the additional requirement that $\chi^+$ is a partition of unity subordinated to an arbitrary, but fixed Cauchy neighborhood $\mathcal{N}$ such that the Cauchy surface at $t=0$ lies in $J^+(\mathcal{N})\setminus\mathcal{N}$. Notice that this entails $\chi^+(0)=1$ and, consequently, $\chi^+_\lambda(0)=1$ for all $\lambda >0$. In addition choose a second Cauchy neighborhood $\mathcal{O}\subset J^+(\mathcal{N})\setminus\mathcal{N}$ and, for every $[f]\in\frac{\Gamma_0(E)}{P_{V^+_\lambda}\Gamma_0(E)}$, let $R_{0,V^+,\lambda}^*[f]=[\widetilde R_{0,V^+,\lambda}(f)]\doteq[(\mathbb{I}-V^+_\lambda G^-_{V^+_\lambda})(f)]$. Then, for any $f,f^\prime\in\Gamma_0(E)$, we call
\begin{gather}
\mathcal{R}_{0,V^+,\lambda}^*\doteq R_{0,V^+,\lambda}^*\circ TS_{\mathcal{O}}^{V^+_\lambda,*}\circ G_V,\label{lambda_R}\\
\omega_{2,V,\lambda}(f,f^\prime)\doteq\omega_2(\mathcal{R}_{0,V^+,\lambda}^*[f]\otimes\mathcal{R}_{0,V^+,\lambda}^*[f^\prime]),\label{lambda_state}
\end{gather}
where $[f],[f^\prime]$ are the equivalence classes in $\frac{\Gamma_0(E)}{P\Gamma_0(E)}$ built out $f$ and $f^\prime$ respectively.
The adiabatic limit consists of taking the limit for $\lambda\to\infty$ of this last expression. Two questions arise, namely if the limit exists and, if so, which are the properties of the resulting state. In the next proposition we address this second issue, leaving a partial answer to the first one to the next section.
\begin{proposition}\label{limit_properties}
Let $\omega:\mathcal{A}^{obs}(M)\to\mathbb{C}$ be a quasi-free Hadamard state and let $f,f^\prime\in\Gamma_0(E)$. Define $\omega_{2,V,\lambda}$ as \eqref{lambda_state} and let 
\begin{gather*}
\widetilde{\omega}_{2,V}(f,f^\prime)\doteq\lim\limits_{\lambda\to\infty}\omega_{2,V,\lambda}(f,f^\prime).
\end{gather*}
If the limit exists, it defines a distribution $\widetilde{\omega}_{2,V}\in\mathcal{D}^\prime(E\times E)$, such that it identifies a quasi-free state $\widetilde{\omega}_V$ for $\mathcal{A}_V^{obs}(M)$. Such a state is invariant under the action both of time-translation if this is an isometry of $(M,g)$ and of any complete Killing field $\xi\in\Gamma(TM)$ which, up to the isometry between $M$ and $\bR\times\Sigma$ satisfies the hypotheses of Theorem \ref{main}, provided also that $\omega$ enjoys the same property and that $\mathcal{L}_nV=\mathcal{L}_\xi V=0$, where $n$ is the vector field which coincides at each point with $\frac{\partial}{\partial t}$, $t$ being chosen as in the third item of Proposition \ref{BS}.
\end{proposition}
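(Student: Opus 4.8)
The plan is to check successively that $\widetilde\omega_{2,V}$ is a bi-distribution, that it inherits the quasi-free structure, and then to establish the two invariances, the time-translation one being the genuinely new point.

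First I would observe that for each fixed $f'$ the functionals $f\mapsto\omega_{2,V,\lambda}(f,f')$ lie in $\mathcal{D}'(E)$ and converge pointwise; since $\Gamma_0(E)$ is barrelled, the Banach--Steinhaus theorem ensures the weak-$*$ limit is again continuous, so $\widetilde\omega_{2,V}$ is separately distributional in each entry and the Schwartz kernel theorem delivers $\widetilde\omega_{2,V}\in\mathcal{D}'(E\times E)$. The quasi-free axioms then pass to the limit one at a time. Every $\omega_{2,V,\lambda}$ is the two-point function of the Hadamard state produced by Theorem \ref{main} with cut-off $\chi^+_\lambda$; it is therefore positive and compatible with $P_V$ --- the latter because $\mathcal{R}^*_{0,V^+,\lambda}$ annihilates the class $P_V[\Gamma_0(E)]$ --- and both properties are stable under pointwise limits. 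For the commutation relation I use that, by Proposition \ref{iso_algebra}, each $\mathcal{R}^*_{0,V^+,\lambda}$ is a symplectomorphism, so the antisymmetric part of $\omega_{2,V,\lambda}$ equals $i\sigma(\mathcal{R}^*_{0,V^+,\lambda}f,\mathcal{R}^*_{0,V^+,\lambda}f')=i\sigma_V(f,f')$, manifestly independent of $\lambda$ and hence preserved in the limit. Thus $\widetilde\omega_{2,V}$ is the two-point function of a quasi-free state $\widetilde\omega_V$.

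Invariance under a complete Killing field $\xi$ with $\mathcal{L}_\xi\chi^+=0$ is immediate, since the rescaled cut-off $\chi^+_\lambda(t)=\chi^+(t/\lambda)$ still obeys $\mathcal{L}_\xi\chi^+_\lambda=0$, so Theorem \ref{main}\emph{(2)} applies to each $V^+_\lambda$ and $\omega_{2,V,\lambda}(\widetilde\alpha_\xi(s)f,\widetilde\alpha_\xi(s)f')=\omega_{2,V,\lambda}(f,f')$ for every $\lambda$; the limit inherits this. For time-translation, assuming $n=\partial_t$ Killing and $\mathcal{L}_nV=0$, the operators $P$, $G_V$ and $\omega_2$ all commute with the lifted translation $\widetilde\alpha_n(s)$ (in the notation of Remark \ref{Remark: on induced bundles}), whereas conjugation by $\widetilde\alpha_n(s)$ turns $\mathcal{R}^*_{0,V^+,\lambda}$ into the operator $\mathcal{R}^{*,(s)}_{0,V^+,\lambda}$ built from the rigidly shifted profile $\chi^+_\lambda(\cdot-s)$. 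Using the $\widetilde\alpha_n(s)$-invariance of $\omega_2$ this yields $\omega_{2,V,\lambda}(\widetilde\alpha_n(s)f,\widetilde\alpha_n(s)f')=\omega_2(\mathcal{R}^{*,(s)}_{0,V^+,\lambda}f,\mathcal{R}^{*,(s)}_{0,V^+,\lambda}f')$, so the claim reduces to showing that the shifted family reproduces the same adiabatic limit as the unshifted one.

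This last reduction is the main obstacle. On $\mathcal{O}$ both $\chi^+_\lambda$ and $\chi^+_\lambda(\cdot-s)$ equal $1$ for all large $\lambda$, so the inner factor $TS^{V^+_\lambda,*}_{\mathcal{O}}\circ G_V$ is insensitive to the shift and the two constructions differ only through the advanced correction in $R^*_{0,V^+,\lambda}=\mathbb{I}-V^+_\lambda G^-_{V^+_\lambda}$. Writing $V^{+,s}_\lambda\doteq V\chi^+_\lambda(\cdot-s)$ and invoking the second-resolvent identity $G^-_{V^{+,s}_\lambda}-G^-_{V^+_\lambda}=-G^-_{V^{+,s}_\lambda}(V^{+,s}_\lambda-V^+_\lambda)G^-_{V^+_\lambda}$, the discrepancy is driven by $V^{+,s}_\lambda-V^+_\lambda=V\bigl(\chi^+(\tfrac{t-s}{\lambda})-\chi^+(\tfrac{t}{\lambda})\bigr)$, whose supremum is $O(s/\lambda)$ by the mean value theorem and whose support is pinned to the transition window of $\chi^+_\lambda$, receding to the far past as $\lambda\to\infty$. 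The hard part is to show that this receding, $O(s/\lambda)$-small difference, once propagated back by $G^-$, gives a vanishing contribution when paired against the Hadamard bidistribution $\omega_2$, whose wavefront set \eqref{WF} supplies the needed control; restricted to any fixed compact set $\mathcal{R}^{*,(s)}_{0,V^+,\lambda}f$ and $\mathcal{R}^*_{0,V^+,\lambda}f$ then converge to the same section, so both pairings share the common limit $\widetilde\omega_{2,V}(f,f')$ assumed to exist. Granting this, $\widetilde\omega_{2,V}(\widetilde\alpha_n(s)f,\widetilde\alpha_n(s)f')=\widetilde\omega_{2,V}(f,f')$ and $\widetilde\omega_V$ is stationary.
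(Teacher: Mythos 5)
Your overall strategy tracks the paper's own proof closely: inherit the quasi-free axioms from the finite-$\lambda$ states, show that conjugating $\mathcal{R}^*_{0,V^+,\lambda}$ by an isometry produces the same operator built from the transported cut-off, invoke the invariance of $\omega$, and reduce time-translation invariance to comparing the shifted and unshifted adiabatic families. Several of your intermediate steps are correct and in places tidier than the paper's: compatibility with the equations of motion via $G_V\circ P_V=0$ on $\Gamma_0(E)$ (the paper argues instead that $P_Vf=P_{V^+_{\overline{\lambda}}}f$ for $\overline{\lambda}$ large enough); the CCR part via the symplectomorphism property of Proposition \ref{iso_algebra}, which makes the antisymmetric part exactly $\lambda$-independent (the paper instead shows $G_{V^+_\lambda}\to G_V$ weakly); and the Banach--Steinhaus/kernel-theorem argument for $\widetilde\omega_{2,V}\in\mathcal{D}^\prime(E\times E)$, a point the paper leaves implicit. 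One smaller omission: your covariance identity (the paper's \eqref{R-commute}) is asserted rather than proved; the paper derives it by checking that $\widetilde{\alpha}_\xi(-\eta)\circ G^-_{V^+_\lambda}\circ\widetilde{\alpha}_\xi(\eta)$ is an advanced Green operator for the translated potential and then appealing to the uniqueness results of \cite{Baernew}. That argument should be included.

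The genuine gap is the one you flag yourself: the claim that the rigidly time-translated cut-off family reproduces the same adiabatic limit is left as an assumption (``Granting this''), and that claim \emph{is} the new content of the Proposition --- everything else is inherited from Theorem \ref{main} or from the existence hypothesis. Moreover, the tool you propose for closing it is unlikely to suffice: the Hadamard condition \eqref{WF} constrains the local, ultraviolet singularity structure of $\omega_2$, whereas what must be controlled is the pairing of $\omega_2$ against sections whose supports recede to past infinity as $\lambda\to\infty$ --- an infrared, long-time property that the wavefront set does not govern. The paper does not attempt such propagation estimates in this Proposition at all: having established \eqref{R-commute}, it closes the argument softly, observing that the conjugated operator is just the construction with the cut-off translated rigidly in time, a modification washed out in the limit $\lambda\to\infty$ (the switch-on region recedes to the infinite past in either case), so that the right-hand side is identified with $\widetilde\omega_{2,V}(f,f^\prime)$ under the standing convergence hypothesis. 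Quantitative control of adiabatic limits of this kind is deferred by the authors to Section 5 and achieved only for static spacetimes, through a mode expansion and the spectral condition \eqref{condition on the spectral measure}. So your attempt stops one step short of the conclusion, and the machinery you invoke to bridge that step is both absent from the paper and not adequate as described.
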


\begin{proof}
The statement is a direct consequence of the assumption that $\omega$ is a quasi-free state. With respect to the canonical commutation relations (CCRs), it is enough to observe that, for any $f,f'\in\Gamma_0(E)$ it holds $(G^\pm_{V^+,\lambda}f,f')=(G^\pm_Vf,f')$ for sufficiently large $\lambda$.
As a consequence $G_{V^+,\lambda}$ converges weakly to $G_V$, thus $\widetilde{\omega}_{2,V}$ is compatible with the CCRs.
On account of Theorem \ref{main}, each $\omega_{2,V,\lambda}$ is a quasi-free Hadamard state. If the limit as $\lambda\to\infty$ is well-defined, $\widetilde{\omega}_{2,V}$ preserves the property of being positive. Furthermore, for any $f,f^\prime\in\Gamma_0(E)$, notice that there exists a finite value of $\overline{\lambda}\in\mathbb{R}$ such that $P_Vf=P_{V^+_{\overline{\lambda}}}f$, where $V^+_{\overline{\lambda}}= V\chi^+_{\overline{\lambda}}$. In particular this entails that, for all $\lambda\geq\overline{\lambda}$, $\omega_{2,V,\lambda}(P_Vf, f^\prime)=0$ and, consequently, once taking the limit $\lambda\to\infty$, $\widetilde{\omega}_{2,V}(P_Vf,f^\prime)=0$. Analogously the same result holds true for $\widetilde{\omega}_{2,V}(f,P_Vf^\prime)$ with $f$ and $f^\prime\in\Gamma_0(E)$ arbitrary. Hence $\widetilde{\omega}_{2,V}$ specifies a unique quasi-free state on $\mathcal{A}^{obs}_V(M)$. 

Now let us consider any complete Killing field $\xi\in\Gamma(TM)$ and let $\alpha_\xi(\eta):M\to M$, $\eta\in\mathbb{R}$ be the action on $M$ of the associated $1$-parameter group of isometries, while $\widetilde{\alpha}_\xi(\eta):\Gamma(E)\to\Gamma(E)$ is the action induced by $\alpha$ on the smooth sections of $E$. We observe that $G^-_{V^+_\lambda}\circ\widetilde{\alpha}_\xi(\eta)=\widetilde{\alpha}_\xi(\eta)G^-_{V^+_{\lambda,\eta}}$, where $V_{\lambda,\eta}\doteq V\chi^+_{\lambda,\eta}$, being $\chi^+_{\lambda,\eta}\doteq(\widetilde{\alpha}_\xi(-\eta)\chi^+)_{\lambda}$. To prove this identity, first of all we compute, for any value of $\lambda$ and of $\eta$:
\begin{eqnarray*}
P_{V^+_{\lambda,\eta}}\left(\widetilde{\alpha}_\xi(-\eta)\circ G^-_{V^+_\lambda}\circ\widetilde{\alpha}_\xi(\eta)\right)&=&
\widetilde{\alpha}_\xi(-\eta)\circ P\circ G^-_{V^+_\lambda}\circ\widetilde{\alpha}_\xi(\eta)+
V^+_{\lambda,\eta}\circ\widetilde{\alpha}_\xi(-\eta)\circ G^-_{V^+_\lambda}\circ\widetilde{\alpha}_\xi(\eta)\\ &=&
\widetilde{\alpha}_\xi(-\eta)\circ P_{V^+_\lambda}\circ G^-_{V^+_\lambda}\circ\widetilde{\alpha}_\xi(\eta)=id|_{\Gamma_{tc}(E)}.
\end{eqnarray*}
In the second equality, we used that $P$ commutes per assumption with the action of all isometries and that $\widetilde{\alpha}_\xi(\eta)^{-1}=\widetilde{\alpha}_\xi(-\eta)$.
A similar calculation proves that $ \left(\widetilde{\alpha}_\xi(-\eta)\circ G^-_{V^+_\lambda}\circ\widetilde{\alpha}_\xi(\eta)\right)P_{V^+_{\lambda,\eta}}=id|_{\Gamma_{tc}(E)}$. Additionally, for every $\varphi\in\Gamma_{tc}(E)$, it holds:
\begin{gather*}
\supp\left(\left(\widetilde{\alpha}_\xi(-\eta)\circ G^-_{V^+_\lambda}\circ\widetilde{\alpha}_\xi(\eta)\right)(\varphi)\right)=
\alpha_\xi(-\eta)\left(\supp\left(\left( G^-_{V^+_\lambda}\circ\widetilde{\alpha}_\xi(\eta)\right)(\varphi)\right)\right)\subseteq\\
\subseteq\alpha_\xi(-\eta) J^-(\supp(\widetilde{\alpha}_\xi(\eta)(\varphi)))=J^-(\supp(\varphi)).
\end{gather*}
In other words $\widetilde{\alpha}_\xi(-\eta)\circ G^-_{V^+_\lambda}\circ\widetilde{\alpha}_\xi(\eta)$ is an advanced Green operator for $P_{V^+_{\lambda,\eta}}$ on $\Gamma_{tc}(E)$. Since $P_{V^+_{\lambda,\eta}}$ is formally self-adjoint with respect to the standard pairing between $\Gamma_{tc}(E)$ and $\Gamma_0(E)$, the results of \cite{Baernew} guarantee us that the advanced and retarded fundamental solutions are unique. Hence $\widetilde{\alpha}_\xi(-\eta)\circ G^-_{V^+_\lambda}\circ\widetilde{\alpha}_\xi(\eta)$ coincides with $G^-_{V^+_{\lambda,\eta}}$. As a consequence it holds that, for any fixed value of $\lambda$ and for any $\eta\in\mathbb{R}$,
$$\widetilde R^*_{0,V^+,\lambda}\circ\widetilde{\alpha}_\xi(\eta)=\widetilde{\alpha}_\xi(\eta)\circ \widetilde R^*_{\widetilde{\alpha}_\xi(\eta)\left(0,V^+,\lambda\right)},$$
where $\widetilde R^*_{\widetilde{\alpha}_\xi(\eta)\left(0,V^+,\lambda\right)}\doteq \mathbb{I}-V^+_{\lambda,\eta}\circ G^-_{V^+_{\lambda,\eta}}$. Combining this information with the property of the time-slice operator $TS_{\mathcal{O}}^{V^+_\lambda, *}\circ \widetilde{\alpha}_\xi(\eta)=\widetilde{\alpha}_\xi(\eta)\circ TS_{\alpha_\xi(-\eta)(\mathcal{O})}^{\widetilde{\alpha}_\xi(-\eta)(V^+_\lambda), *}$, it holds that 
\begin{equation}\label{R-commute}
\mathcal{R}^*_{0,V^+,\lambda}\circ\widetilde{\alpha}_\xi(\eta)=\widetilde{\alpha}_\xi(\eta)\circ\mathcal{R}^*_{\widetilde{\alpha}_\xi(\eta)\left(0,V^+,\lambda\right)},
\end{equation}
where $\mathcal{R}^*_{\widetilde{\alpha}_\xi(\eta)\left(0,V^+,\lambda\right)}\doteq R^*_{\widetilde{\alpha}_\xi(\eta)\left(0,V^+,\lambda\right)}\circ TS_{\alpha_\xi(-\eta)(\mathcal{O})}^{\widetilde{\alpha}_\xi(-\eta)(V^+_\lambda), *}\circ G_V$. To conclude, consider any pair $f,f^\prime\in\Gamma_0(E)$ and suppose that $\xi$ is one of the Killing fields as per hypothesis; it holds
$$\omega_{2,V,\lambda}(\widetilde{\alpha}_\xi(\eta)(f),\widetilde{\alpha}_\xi(\eta)(f^\prime))=
\omega_2(\mathcal{R}^*_{\widetilde{\alpha}_\xi(\eta)\left(0,V^+,\lambda\right)}[f], \mathcal{R}^*_{\widetilde{\alpha}_\xi(\eta)\left(0,V^+,\lambda\right)}[f^\prime]),$$
where we used the hypothesis of invariance of $\omega$. The case when $\xi$ is a Killing field such that $\mathcal{L}_\xi\chi^+=0$ and $\mathcal{L}_\xi V=0$ entails that $\mathcal{R}^*_{\widetilde{\alpha}_\xi(\eta)\left(0,V^+,\lambda\right)}=\mathcal{R}^*_{\left(0,V^+,\lambda\right)}$. Hence, as already proven in Theorem \ref{main}, for each $\lambda\in\mathbb{R}$, the corresponding state is invariant under the action of these isometries and the same holds true necessarily when taking the limit as $\lambda\to\infty$. When $\xi$ coincides with $n$, the effect of $\widetilde{\alpha}_n$ and of $\alpha_n$ is only to translate in the time-direction. Hence, taking the limit for $\lambda\to\infty$, under the additional hypothesis that $\mathcal{L}_n V=0$, the left hand side converges to $\widetilde\omega_{2,V}(\widetilde{\alpha}_n(\eta)(f),\widetilde{\alpha}_n(\eta)(f^\prime))$ while the right hand side to $\widetilde\omega_{2,V}(f,f^\prime)$ which is the sought statement.
\end{proof}

\begin{remark}
Notice that, in the above proposition, we have no control of the action of those isometries whose associated complete Killing field $\xi$ is neither proportional to $n$ nor fulfilling the hypotheses of Theorem \ref{main}. Hence, when present, invariance under the action of these remaining isometries should be checked case by case. The prime example consists of applying this method to Minkowski spacetime starting from the vacuum state for a free massive or massless scalar field. In this case, the adiabatic limit exists and it coincides with the ground state. Hence it is invariant under the action of the whole Poincar\'e group.
\end{remark}

It remains to be shown that there exist cases in which the adiabatic limit can be taken and thus the hypotheses of Proposition \ref{limit_properties} are met.

\section{The adiabatic limit on static spacetimes -- sufficient conditions}

As a last step, we discuss the convergence of \eqref{lambda_state} on static spacetimes, showing in particular that it can be reduced to studying the properties of the eigenfunctions of the Laplace-Beltrami operator built out of the metric induced on a Cauchy surface. This procedure generalizes the strategy used in \cite{Dragon1} to deal with the adiabatic limit on Minkowski spacetime. For the rest of the section we will make the following additional assumptions:
\begin{enumerate}
	\item $(M,g)$ is a globally hyperbolic, ultrastatic spacetime with a non compact Cauchy surface. With reference to Proposition $\ref{BS}$, this entails that $\beta=1$ and that, for each $t\in\mathbb{R}$, the metric $h_t$ is time-independent.
	\item As starting point, we consider a minimally coupled real scalar field $\phi\in C^\infty(M)$ with a fixed value $m_1$ of the mass. The equation of motion is $P\phi=\left(-\partial^2_t -K-m^2_1\right)\phi=0$, where $K$ is minus the Laplace
	 operator built out of $h$, the Riemannian metric on any Cauchy surface of $M$. The role of the potential $V$ is played by $m^2_2-m^2_1$, where $m_2$ is another, possibly vanishing, value of the mass. 
	\item We consider $\omega:\mathcal{A}^{obs}(M)\to\mathbb{C}$ to be a quasi-free ground state, which is of Hadamard form in view of the results of \cite{Sahlmann:2000fh}. 
\end{enumerate}

As the title of the section suggests, the whole construction works also for static spacetime. The price to pay is that the function $\beta$ becomes dependent also on the spatial coordinates. Hence the Klein-Gordon operator becomes of the form considered in item {\em 2.} only if one replaces $\phi$ with $\phi^\prime=\beta^{-2}\phi$ and $K$, though still elliptic, is no longer just the Laplace operator, but acquires extra terms. We feel that avoiding these additional complications enhances the clarity of the presentation.
 The main tool that we will be making use of is a mode expansion which for static, globally hyperbolic spacetimes, has been discussed in \cite{Fulling:1989nb} and, from a more complete mathematical point of view, recently in \cite[Section 2]{Avetisyan}. We refer especially to this last paper for more technical details. Following \cite{Fulling:1989nb}, we decompose the two-point function of a ground state as
\begin{equation}\label{mode-expansion}
\omega_{2,j}(x,y)=\int d\mu(k) T_{k,j}(t)\overline{T}_{k,j}(t^\prime)\psi_k(\underline{x})\overline{\psi}_{k}(\underline{y}),
\end{equation}
where $x=(t,\underline{x})$, $y=(t^\prime,\underline{y})$ while $j=1,2$ is introduced to distinguish between the values $m_1$ and $m_2$ of the mass, which are our starting and arrival point respectively. Here each $\psi_k$ is an eigenfunction of the positive elliptic operator $K$, that is $K\psi_k=\lambda^2_k\psi_k$, $\lambda_k\in\mathbb{R}$ and $d\mu(k)$ is a suitable spectral measure -- see \cite{Avetisyan}. At the same time each $T_{k,j}(t)$ obeys to the ODE:
\begin{equation}\label{modes}
\ddot{T}_{k,j}(t) + (\lambda^2_k+m^2_j)T_{k,j}(t)=0,
\end{equation}
where $j=1,2$ distinguishes between the case when we consider the mass $m_1$ or $m_2$. 
Furthermore the following normalisation condition is imposed for the Wronski determinant:
$$\overline{\dot{T}_{k,j}(t)}T_{k,j}(t)-\overline{T_{k,j}(t)}\dot{T}_{k,j}(t)=i.\quad\forall k,\,\textrm{and}\,j=1,2$$
We select the solution $T_{k,1}(t)=\frac{e^{-i\omega_{k,1}t}}{\sqrt{2\omega_{k,1}}}$
 (positive frequency splitting) with $\omega^2_{k,1}=\lambda^2_k+m^2_1$. A similar expansion can be written for the two-point function $\widetilde\omega_{2,\chi}(\cdot,\cdot)\doteq\omega_{2,1}(\widetilde R^*_{0,V^+}\cdot,\widetilde R^*_{0,V^+})$ where $\widetilde R^*_{0,V^+}=\mathbb{I}-V^+G^-_{V^+}$ with $V^+=(m^2_2-m^2_1)\chi^+(t)$ and where $G^-_{V^+}$ is the advanced fundamental solution of $P_{V^+}=P+V^+$. In this case $\omega_{2,\chi}$ acquires the same form of \eqref{mode-expansion}, though with $T_{1,k}(t)$ and $\overline{T}_{1,k}(t)$ replaced by $T_{k,\chi}(t)$ and its complex conjugate, fulfilling the ordinary differential equation
\begin{equation}\label{midmodes}
\ddot{T}_{k,\chi}(t) + (\lambda^2_k+m^2_1 + (m^2_2-m^2_1)\chi(t))T_{k,\chi}(t)=0,
\end{equation} 
with the initial condition $T_{k,\chi}(t_0)=T_{1,k}(t_0)$ for $t_0\in M\setminus\supp(\chi^+)$.
Notice that, in comparison to \eqref{lambda_state}, we are discarding the contribution of $TS^{V^+_\lambda,*}_{\mathcal{O}}\circ G_V$. This is legitimate since, in view of Lemma \ref{Lemma_dualTS} and of the properties of $\mathcal{O}$, the difference between any $f\in C^\infty_0(M)$ and its image under the action of these operators lies in the image of $P_V$, which vanishes when evaluated on a state for $\mathcal{A}^{obs}_V(M)$.

In order to perform the adiabatic limit $\chi\to 1$, first of all we write the cut-off $\chi(t)$ in the form $\int\limits_{-\infty}^tds\,f(s)$ with $f\in C^\infty_0(\mathbb{R})$ and such that $\int_{-\infty}^0ds\,f(s)=1$. If we replace such $\chi$ with $\chi_n(t)\doteq\chi(\frac{t}{n})$, the limit $\chi\to 1$ corresponds to $n\to\infty$.

Furthermore we observe that formula (\ref{mode-expansion}) expresses the two-point function of the state as the application of $f_j^{t,t'}$ on $K$, being $f_j^{t,t'}(\lambda_k)\doteq T_{k,j}(t)\overline{T_{k,j}(t')}$. Actually for any $u,v\in C^\infty_0(M)$ it holds
\begin{gather}\label{2 point function with spectral calculus}
\omega_{2,j}(u,v)=\int\limits_{\bR\times\bR} dtdt' \left(f_j^{t,t'}\right)(K)\left(u(t,\cdot)v(t',\cdot)\right)\qquad j=1,2.
\end{gather}
In a similar way $\widetilde{\omega}_{2,n}\doteq{\widetilde{\omega}_{2,\chi_n}}$ can be expressed as the application of $f_n^{t,t'}$ on $K$, where now $f_n^{t,t'}(\lambda_k)\doteq T_{k,n}(t)\overline{T_{k,n}(t')}$, being $T_{k,n}$ the modes obtained from $T_{k,\chi}$ by substituting $\chi\to\chi_n$.
Therefore the problem of addressing the adiabatic limit is reduced to show that the sequence $(f_n^{t,t'})(K)$ converges to $(f_2^{t,t'})(K)$ in an operational meaningful way\footnote{Note that the time integration will definitely play no role since all integrals  in (\ref{2 point function with spectral calculus}) are compactly supported.}.

Since $K$ is self adjoint on $L^2(\Sigma)$, where $\Sigma$ is an arbitrary but fixed Cauchy surface, and $f_2^{t,t'},f_n^{t,t'}$ are all measurable functions on its spectrum, standard results of spectral analysis allow us to conclude that the wanted convergence is reached whenever the following conditions are met:
\begin{enumerate}
\item $f_n^{t,t'},f_2^{t,t'}$ are bounded on $\sigma(K)$;
\item $\sup_n\|f_n^{t,t'}\|_\infty<\infty$.
\item $f_n^{t,t'}$ converges pointwise to $f_2^{t,t'}$;
\end{enumerate}

The first two conditions are met whenever we consider $m_1,m_2>0$.
In the case when $m_1=0$, $\widetilde{\omega}_{k,n}(t)\doteq \lambda^2_k+m^2_1 + (m^2_2-m^2_1)\chi_n(t)$ vanishes if the eigenvalue $\lambda_k=0$.
This specific case is the one responsible for the failure of this procedure when we try to apply it to a massless, minimally coupled, real scalar field on a globally hyperbolic, static spacetime with compact Cauchy surfaces.
Yet, in the general case we can still give a sufficient condition, which involves further properties of the spectral measure $d\mu(k)$, to ensure the convergence. Rearranging the eigenvalues so that $\lambda_k$ vanishes if $k=0$, the following lemma holds true:

\begin{lemma}
In the above hypotheses, if $m_1,m_2>0$, then the two-point function $\widetilde\omega_{2,n}$ converges to a two-point function for a real scalar field with mass $m_2$.
If $m_1=0$ ($m_2=0$) and the spectral measure $d\mu(k)$ is such that
\begin{gather}\label{condition on the spectral measure}
\lim_{\epsilon\to 0^+}\int_{|\lambda_k|<\epsilon}\frac{1}{|\lambda_k|}d\mu(k)=0,
\end{gather}
then the sequence of two-point functions $\widetilde\omega_{2,n}$ is convergent. Furthermore the limit two-point function identifies a ground state, which is of Hadamard form.
\end{lemma}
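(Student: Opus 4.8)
The plan is to follow the reduction already set up: since $\widetilde\omega_{2,n}$ and the target mass-$m_2$ ground state are both given by the spectral-calculus formula \eqref{2 point function with spectral calculus}, it suffices to control the family of multipliers $f_n^{t,t'}$ built from the solutions $T_{k,n}$ of \eqref{midmodes} with $\chi\to\chi_n$, and to compare them with $f_2^{t,t'}$. The governing remark is that, after rescaling $\tau=t/n$, equation \eqref{midmodes} becomes $\frac{d^2}{d\tau^2}T_{k,n}+n^2\Omega^2(\tau)T_{k,n}=0$, with $\Omega^2(\tau)\doteq\lambda_k^2+m_1^2+(m_2^2-m_1^2)\chi(\tau)$, a high-frequency equation whose adiabatic parameter is of order $n^{-1}$. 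Since $\supp(\dot\chi_n)$ recedes into the past as $n$ grows, for any fixed $t,t'$ in the compact supports of the test sections and all large $n$ the mode, evaluated in the future of the transition, is a Bogoliubov combination $T_{k,n}(t)=\alpha_{k,n}T_{k,2}(t)+\beta_{k,n}\overline{T_{k,2}(t)}$ with $\lvert\alpha_{k,n}\rvert^2-\lvert\beta_{k,n}\rvert^2=1$ enforced by the Wronskian normalisation. The key algebraic point is that in $f_n^{t,t'}=T_{k,n}(t)\overline{T_{k,n}(t')}$ the phase of $\alpha_{k,n}$, which need not converge, cancels against its conjugate; hence establishing $\beta_{k,n}\to 0$, equivalently $\lvert\alpha_{k,n}\rvert\to 1$, already forces $f_n^{t,t'}\to f_2^{t,t'}$ pointwise in $\lambda_k$.

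For $m_1,m_2>0$ the effective frequency satisfies $\Omega^2\ge\min(m_1^2,m_2^2)>0$ uniformly, so \eqref{midmodes} has no turning point. An energy estimate for $\mathcal E_{k,n}\doteq\lvert\dot T_{k,n}\rvert^2+\Omega^2\lvert T_{k,n}\rvert^2$, together with its Gronwall control, then bounds $\lvert T_{k,n}(t)\rvert$ uniformly in $k$ and $n$ on compact time intervals, which gives the boundedness conditions, while the classical adiabatic theorem gives $\beta_{k,n}=O(n^{-1})$ uniformly on $\sigma(K)$, hence the pointwise convergence. By the spectral theorem this yields $f_n^{t,t'}(K)\to f_2^{t,t'}(K)$ strongly, so $\widetilde\omega_{2,n}$ converges to the bidistribution \eqref{mode-expansion} built from the positive-frequency modes $T_{k,2}$, that is the two-point function of the mass-$m_2$ ground state.

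For $m_1=0$ (resp. $m_2=0$) the functions $f_n^{t,t'}$ and $f_2^{t,t'}$ are unbounded near $\lambda_k=0$, where $\Omega^2$ degenerates, so I would split the spectral integral at $\lvert\lambda_k\rvert=\epsilon$. On $\{\lvert\lambda_k\rvert\ge\epsilon\}$ the gap $\Omega^2\ge\epsilon^2$ is restored and the previous argument applies verbatim. On $\{\lvert\lambda_k\rvert<\epsilon\}$ I would establish the a priori bound $\lvert f_n^{t,t'}(\lambda_k)\rvert\le C\lvert\lambda_k\rvert^{-1}$, uniform in $n$ and in $t,t'$ on compacta; combined with \eqref{condition on the spectral measure} this makes the infrared contribution tend to $0$ as $\epsilon\to 0^+$ uniformly in $n$, so that the two limits $n\to\infty$ and $\epsilon\to 0^+$ may be interchanged and $\widetilde\omega_{2,n}$ converges to the same positive-frequency bidistribution as before. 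By Proposition \ref{limit_properties} the limit is invariant under time translation (here $\mathcal L_nV=0$, as $V=m_2^2-m_1^2$ is constant), and a time-translation invariant quasi-free state whose modes are the positive-frequency $T_{k,2}$ is precisely the ground state for mass $m_2$; its Hadamard property then follows from \cite{Sahlmann:2000fh}.

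The main obstacle is the sharp infrared bound $\lvert f_n^{t,t'}(\lambda_k)\rvert\le C\lvert\lambda_k\rvert^{-1}$ uniformly in $n$. In the mass-increasing case $m_1=0,\,m_2>0$ the growth of $\Omega$ telescopes favourably and the crude Gronwall control of $\mathcal E_{k,n}$ already gives $\lvert T_{k,n}(t)\rvert^2\le\lvert\lambda_k\rvert^{-1}$. In the mass-decreasing case $m_2=0$, however, the same estimate is far too lossy, since near the spectral bottom it ignores the oscillating sign of $\Omega^2\lvert T_{k,n}\rvert^2-\lvert\dot T_{k,n}\rvert^2$; the exponent $1$ matching \eqref{condition on the spectral measure} must instead be extracted from the near-conservation of the adiabatic invariant $I_{k,n}\doteq(\lvert\dot T_{k,n}\rvert^2+\Omega^2\lvert T_{k,n}\rvert^2)/\Omega$, normalised to $1$ in the far past, whose drift is controlled by the adiabatic parameter away from $\lambda_k=0$. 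Keeping this control uniform in $n$ down to, but not including, the degeneration point $\lambda_k=0$ — exactly the regime where \eqref{condition on the spectral measure} is tailored to absorb the residual — is the delicate step on which the whole limit hinges.
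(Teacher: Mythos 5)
Your proposal is correct and, structurally, it is the paper's proof: the reduction to the spectral multipliers $f_n^{t,t'}$ of \eqref{2 point function with spectral calculus}, an adiabatic argument giving pointwise convergence and uniform bounds when $m_1,m_2>0$, the splitting of the spectral integral at $|\lambda_k|=\epsilon$ combined with the infrared bound $|T_{k,n}(t)|^2\le|\lambda_k|^{-1}$ and condition \eqref{condition on the spectral measure} in the massless case, and the ground-state plus Hadamard conclusion via \cite{Sahlmann:2000fh}. The only substantive divergence is how the adiabatic estimate is produced in the massive case: the paper does not invoke the adiabatic theorem as a black box, but introduces explicit auxiliary modes $T_{k,n,a}(t)=(2\omega_{k,n}(t))^{-1/2}\exp\bigl(-i\int_{t_0}^{t}\omega_{k,n}(s)\,ds\bigr)$, observes that their products already converge pointwise to $f_2^{t,t'}$ (the common phase accumulated from $t_0$ cancels between $T_{k,n,a}(t)$ and $\overline{T_{k,n,a}(t')}$, which is your phase-cancellation remark in WKB rather than Bogoliubov language), and then bounds $|T_{k,n}-T_{k,n,a}|$ by resumming a Dyson series, yielding the estimate \eqref{estimate}. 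That estimate is precisely the uniform $\beta_{k,n}\to 0$ statement you delegate to the ``classical adiabatic theorem,'' so a self-contained write-up would need to supply it, and the Dyson series is one concrete way. Your infrared bound for $m_1=0$, $m_2>0$ coincides with the paper's: with $\chi$ monotone the Gronwall exponent telescopes, i.e.\ $E_{k,n}/\omega_{k,n}^2$ is non-increasing, giving $|T_{k,n}(t)|^2\le 1/|\lambda_k|$ for all $t$ uniformly in $n$ (note this uses monotonicity, not Gronwall alone, exactly as you say via the ``favourable telescoping''). Finally, your flagging of the mass-decreasing case $m_2=0$ as the delicate unresolved step is accurate but does not put you behind the paper: its proof treats only $m_1=0$, $m_2>0$ ``for simplicity'' and never carries out $m_2=0$ either, so your adiabatic-invariant suggestion is, if anything, a sketch of how one might close a gap the paper itself leaves open.
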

\begin{proof}
The idea of the proof is a natural generalization of the one present in \cite{Dragon1} (see also \cite[Section 2]{Avetisyan}) for the case of Minkowski spacetime.

First we deal with the case $m_1,m_2>0$.
In order to control the limit of $f_n^{t,t'}$ as $n\to\infty$ for all values of the time coordinate $t$, we introduce auxiliary modes, namely
\begin{gather}
T_{k,n,a}(t)\doteq\frac{1}{\sqrt{2\omega_{k,n}(t)}}\exp\left(-i\int\limits_{t_0}^t ds\,\omega_{k,n}(s)\right),
\end{gather}
where $t_0$ is an (inessential) arbitrary but fixed times such that $t_0\in J^-\supp(\chi^-)$.
Notice that, when $n\to\infty$, $T_{k,n,a}(t)\overline{T_{k,n,a}(t')}\to f_2^{t,t'}(k)$ i.e. the product of the auxiliary modes pointwisely converges to the product of those necessary to construct the ground state for a real scalar field of mass $m_2$, cfr \eqref{modes}. As a next step we prove that, in the same limit, then $\left|T_{k,n,a}-T_{k,n}\right|$ tends to $0$. We observe that $T_{k,n,a}(t)$ obeys to the differential equation
\begin{equation}\label{modes2}
\ddot{T}_{k,n,a}+\left(\omega^2_{k,n}+\delta_{k,n}\right)T_{k,n,a}=0,
\end{equation}
where 
$$\delta_{k,n}(t)=\frac{1}{2}\frac{\ddot{\omega_{k,n}}}{\omega_{k,n}}-\frac{3}{4}\frac{\dot{\omega}_{k,n}^2}{\omega^2_{k,n}}.$$
If we regard \eqref{modes2} as \eqref{midmodes} plus a potential $-\delta^2_{k,n}(t)$ which can be treated as a perturbation, we can construct $T_{k,n}$ via a Dyson series as
\begin{gather*}
T_{k,n}(t)=T_{k,n,a}(t)+
\sum\limits_{\ell=1}^\infty(-1)^\ell\int\limits_{-\infty}^t dt_1\int\limits_{-\infty}^{t_1} dt_2...\int\limits_{-\infty}^{t_{\ell-1}} dt_\ell\;\widehat{\Delta}_R^0(t,t_1)\widehat{\Delta}_R^0(t_1,t_2)\dots\widehat{\Delta}_R^0(t_{\ell-1},t_\ell)\notag\\
\delta_{k,n}(t_1)...\delta_{k,n}(t_\ell)T_{k,n,a}(t_\ell),
\end{gather*} 
where $\Delta_R^0$ is the retarded propagator of the unperturbed equation, namely, for any $h\in C^\infty_0(\mathbb{R})$, 
$$\widehat{\Delta}_R^0(h)(t)=\int\limits_{-\infty}^td\tau\,\frac{\sin(\int\limits_\tau^t\omega_{k,n}(s)ds)}{\sqrt{\omega_{k,n}(\tau)\omega_{k,n}(t)}}h(\tau).$$
Similarly to \cite{Dragon1}, the following estimate descends from the Dyson series:
\begin{gather}
|T_{k,n}-T_{k,n,a}|\leq\frac{1}{2\sqrt{\omega_{k,n}}}\left|-1+\exp(\int\limits_{\mathbb{R}}d\tau\,\frac{|\delta_{k,n}(\tau)|}{\omega_{k,n}(\tau)})\right|\leq\notag\\
\leq \frac{C}{2\sqrt{\omega_{k,n}}}\left(\int\limits_{\mathbb{R}}d\tau\,|(m^2_2-m^2_1)\frac{d^2(\omega^2_{k,n})}{d\tau^2}|+\int\limits_{\mathbb{R}}d\tau\,\left(\frac{d(\omega^2_{k,n})}{d\tau}\right)^2\right),\label{estimate}
\end{gather}
where $C$ is a suitable positive constant.
It then follows from \eqref{estimate} that $|T_{k,n}-T_{a,n,k}|$ tends to $0$, proving the pointwise convergence $f_n^t\to f_2^t$.

It remains to show that $\supp_n\|f_n^t\|_\infty<\infty$: in view of the definition of the modes $T_{k,n,a}(t)$, we notice that 
$$|T_{k,n}(t)|\leq|T_{k,n}(t)-T_{k,n,a}(t)|+|T_{k,n,a}(t)-T_{2,k}(t)|+|T_{2,k}(t)|\leq \frac{C^\prime}{\sqrt{2\widetilde{\omega}_k}},$$
where $C^\prime$ is a positive constant, $T_{2,k}$ are the modes built out of \eqref{modes}, replacing $m_1$ with $m_2$ and where the last inequality descends from \eqref{estimate}, having defined $\widetilde{\omega}_k=\min(\omega_{1,k},\omega_{2,k})$. 

This concludes the proof for the case $m_1,m_2>0$.
We stress that the above arguments relies on the assumption that we started from a massive real scalar field, since, in the massless case, $\omega_{k,n}$ might not be bounded from below by a positive constant.

Assume now for simplicity $m_1=0,\ m_2>0$: we introduce the energy for the modified modes
\begin{gather}\label{energy for the retarded modes}
E_{k,n}(t)\doteq|\dot{T}_{k,n}(t)|^2+\omega_{k,n}^2(t)|T_{k,n}(t)|^2.
\end{gather}
By choosing $\chi$ to be monotonically increasing we get that $\omega_{k,n}^2$ is monotonically increasing too, and therefore
\begin{gather*}
\frac{d}{dt}\left(\frac{E_{k,n}}{\omega_{k,n}^2}\right)=
-|T_{k,n}|^2\omega_{k,n}^{-4}\frac{d}{dt}\omega^2_{k,n}\leq 0.
\end{gather*}
From that we can infer that
\begin{gather*}\label{estimate for modes in the massless case}
|T_{k,n}(t)|^2\leq
\frac{E_{k,n}(t)}{\omega_{k,n}^2(t)}\leq
\frac{E_{k,n}(t_0)}{\omega_{k,n}^2(t_0)}=\frac{1}{|\lambda_k|},
\end{gather*}
having chosen $t_0$ such that $\chi_n(t_0)=0$.

Thanks to property (\ref{condition on the spectral measure}), we can now write the difference $\omega_{2,n}(f,f^\prime)-\widetilde\omega_{2}(f,f^\prime)$, where $f,f^\prime\in C^\infty_0(M)$, as an integral in the measure $d\mu(k)$, splitting it into the contribution for $|\lambda_k|\gtrless\epsilon$ respectively.

The first contribution tends to zero as $n\to\infty$ on account of the same arguments used for the massive case, while the latter one can be estimated, see (\ref{estimate for modes in the massless case}), as $\int_{|\lambda_k|<\epsilon}|\lambda_k|^{-1}d\mu(k)$ which tends to zero as $\epsilon\to 0$.

To conclude the proof, it suffices to notice that \eqref{mode-expansion} itself guarantees that we have built a ground state for a real scalar field with mass $m_2$. This is of Hadamard form as a consequence of the analysis in \cite{Sahlmann:2000fh}.
\end{proof}

\section*{Acknowledgements}
The authors are grateful to Christian G\'erard, Felix Finster, Thomas-Paul Hack and Nicola Pinamonti for the useful discussions. The work of C.D. is supported by the University of Pavia and he is grateful to the department of mathematics of the University of Genoa for the kind hospitality during the realization of this work. The work of N.D. is supported by a Ph.D. grant of the university of Genoa and he is grateful to the department of physics of the University of Pavia and to the Laboratoire de Math\'ematiques of the Universit\'e Paris Sud for the kind hospitality during the realization of this work.

\end{document}